\pgfplotsset{compat=1.18}
\setlist{nosep}
\setlist{itemsep=1pt, topsep=1pt}
\colorlet{mygray}{gray!20}
   \def\@citecolor{blue}%
   \def\@urlcolor{blue}%
   \def\@linkcolor{blue}%
\def\orcidID#1{\href{http://orcid.org/#1}{\protect\raisebox{-1.25pt}{\protect\includegraphics{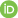}}}}
\spnewtheorem{requirement}{Requirement}{\bfseries}{\itshape}
\let\oldnl\nl
\newcommand{\nonl}{\renewcommand{\nl}{\let\nl\oldnl}}
\def\CPP{{\pl{C}\nolinebreak[4]\hspace{-.05em}\raisebox{.4ex}{\tiny\bf ++}}}
\renewcommand{\epsilon}{\varepsilon}
\let\oldphi\phi
\let\oldvarphi\varphi
\renewcommand{\varphi}{\oldphi}
\renewcommand{\phi}{\oldvarphi}
\newcommand{\pred}[1]{\mathsf{#1}}
\newcommand{\pl}[1]{\textsf{#1}}
\newcommand{\tool}[1]{\textsf{#1}}
\newcommand{\add}{\mathsf{add}}
\newcommand{\push}{\mathsf{push}}
\newcommand{\pop}{\mathsf{pop}}
\newcommand{\checksat}{\mathsf{check}\_\mathsf{sat}}
\newcommand{\getmodel}{\mathsf{get}\_\mathsf{model}}
\newcommand{\sip}{\mathsf{sip}}
\newcommand{\unsafe}{\mathsf{unsafe}}
\newcommand{\sat}{\mathsf{sat}}
\newcommand{\safe}{\mathsf{safe}}
\newcommand{\cache}{\mathsf{cache}}
\newcommand{\block}{\mathsf{b}}
\newcommand{\ABMC}{\mathsf{ABMC}}
\newcommand{\BMC}{\mathsf{BMC}}
\newcommand{\accel}{\mathsf{accel}}
\newcommand{\compose}{{\circledcirc}}
\newcommand{\concat}{\mathrel{::}}
\newcommand{\init}{\pred{init}}
\newcommand{\pre}{\mathsf{pre}}
\newcommand{\err}{\pred{err}}
\newcommand{\QF}{\mathsf{QF}}
\renewcommand{\AA}{\mathcal{A}}
\newcommand{\CC}{\mathcal{C}}
\newcommand{\LL}{\mathcal{L}}
\newcommand{\VV}{\mathcal{V}}
\newcommand{\trace}{\mathsf{trace}}
\newcommand{\inner}{\mathsf{i}}
\renewcommand{\outer}{\mathsf{o}}
\newcommand{\shouldaccel}{\mathsf{should\_accel}}
\newcommand{\ZZ}{\mathbb{Z}}
\newcommand{\NN}{\mathbb{N}}
\newcommand{\TT}{\mathcal{T}}
\newcommand{\DG}{\mathcal{DG}}
\newcommand{\GG}{\mathcal{G}}
\newcommand{\Def}{\mathrel{\mathop:}=}
\renewcommand{\emptyset}{\varnothing}
\newcommand{\oldcomment}[1]{}
\newcommand{\report}[1]{#1}
\newcommand{\paper}[1]{}
\DeclareMathOperator{\dom}{dom}
\newcommand{\id}{\mathsf{id}}
\crefname{algorithm}{alg.}{algorithms}%
\crefname{equation}{eq.}{equations}%
\crefname{chapter}{chapter}{chapters}%
\crefname{section}{sect.}{sections}%
\crefname{appendix}{app.}{appendices}%
\crefname{enumi}{item}{items}%
\crefname{footnote}{footnote}{footnotes}%
\crefname{figure}{fig.}{figures}%
\crefname{table}{table}{tables}%
\crefname{theorem}{thm.}{theorems}%
\crefname{lemma}{lemma}{lemmas}%
\crefname{corollary}{cor.}{corollaries}%
\crefname{proposition}{proposition}{propositions}%
\crefname{definition}{def.}{definitions}%
\crefname{result}{result}{results}%
\crefname{example}{ex.}{examples}%
\crefname{remark}{remark}{remarks}%
\crefname{note}{note}{notes}%
\crefname{lstlisting}{listing}{listings}%
\crefname{requirement}{req.}{requirements}%
\title{Integrating Loop Acceleration into Bounded Model Checking\thanks{funded by
    the Deutsche Forschungsgemeinschaft (DFG, German Research Foundation)
    - 235950644 (Project GI 274/6-2)}}
\titlerunning{Integrating Loop Acceleration into Bounded Model Checking}
\author{Florian Frohn$^{(\href{mailto:florian.frohn@informatik.rwth-aachen.de}{\mbox{\Letter}})}$\orcidID{0000-0003-0902-1994} and Jürgen Giesl$^{(\href{mailto:giesl@informatik.rwth-aachen.de}{\mbox{\Letter}})}$\orcidID{0000-0003-0283-8520}}
\institute{RWTH Aachen University,  Aachen, Germany\\
\email{\{florian.frohn,giesl\}@informatik.rwth-aachen.de}}
\authorrunning{F.\ Frohn, J.\ Giesl}
\newif\ifbadgeavailable\newif\ifbadgefunctional\newif\ifbadgereusable
\ifbadgeavailable\includegraphics[width=11mm]{badges/FM_2024_AE_available}\hspace{.815\linewidth}\else\hspace{.905\linewidth}\fi%
\ifbadgefunctional\includegraphics[width=11mm]{badges/FM_2024_AE_functional}\else\ifbadgereusable\includegraphics[width=11mm]{badges/FM_2024_AE_reusable}\fi\fi}}
\begin{document}

\renewcommand{\thelstlisting}{\arabic{lstlisting}}

\maketitle

\begin{abstract}
  \emph{Bounded Model Checking} (BMC) is a powerful technique for proving unsafety.
  However, finding \emph{deep counterexamples} that require a large bound is challenging for BMC.
  On the other hand, \emph{acceleration techniques} compute ``shortcuts'' that
  ``compress'' many execution steps into a single one.
  In this paper, we tightly integrate acceleration techniques into SMT-based bounded model checking.
  By adding suitable ``shortcuts'' on the fly, our approach can quickly detect
  deep counterexamples.
  Moreover, using so-called \emph{blocking clauses}, our approach can prove safety of examples where BMC diverges.
  An empirical comparison with other state-of-the-art techniques shows that our approach is highly competitive for proving unsafety, and orthogonal to existing techniques for proving safety.
\end{abstract}

\section{Introduction}
\label{sec:intro}

\emph{Bounded Model Checking} (BMC) is a powerful technique for disproving safety properties of, e.g., software or hardware systems.
However, as it uses breadth-first search to find counterexamples, the search space grows exponentially w.r.t.\ the \emph{bound}, i.e., the limit on the length of potential counterexamples.
Thus, finding \emph{deep counterexamples} that require large bounds is challenging for BMC.
On the other hand, \emph{acceleration techniques} can compute a first-order formula that characterizes the transitive closure of the transition relation induced by a loop.
Intuitively, such a formula corresponds to a ``shortcut'' that ``compresses'' many execution steps into a single one.
In this paper, we consider relations defined by quantifier-free first-order formulas over some background theory like non-linear integer arithmetic
and two disjoint vectors of variables $\vec{x}$ and $\vec{x}'$, called the \emph{pre-} and \emph{post-variables}.
Such \emph{transition formulas} can easily represent, e.g., \emph{transition systems}
(TSs), linear \emph{Constrained Horn Clauses} (CHCs), and \emph{control-flow automata}
(CFAs).\footnote{To this end, it suffices to introduce one additional variable that 
represents the control-flow location (for TSs and CFAs) or the predicate (for linear
CHCs).}
Thus, they subsume many popular intermediate representations used for verification of
systems specified in more expressive languages.

In contrast to, e.g., source code,
transition formulas are completely unstructured.
However, source code may be
unstructured, too (e.g., due to {\tt goto}s), i.e., one cannot rely on the input being well structured.
So the fact that our approach is independent from the structure of the input makes it broadly applicable.

\begin{example}
  \label{ex:ex1}
  Consider the transition formula $\tau \Def \tau_{x<100} \lor \tau_{x=100}$ where

  \noindent
  \begin{minipage}{0.57\textwidth}
    \begin{align*}
      \tau_{x<100} & {} \Def x < 100 \land x' = x + 1 \land y' = y \hspace{1em} \text{and}\\
      \tau_{x=100} & {} \Def x = 100 \land x' = 0 \land y' = y + 1.
    \end{align*}
    It defines a relation $\to_{\tau}$ on $\ZZ \times \ZZ$ by relating the pre-variables $x$ and $y$ with the post-variables $x'$ and $y'$.
    So for all $c_x,c_y, c'_x, c'_y \in \ZZ$,
  \end{minipage}\hspace*{.12cm}
  \begin{minipage}{0.38\textwidth}
    \medskip
    \begin{lstlisting}[language=C, caption=Implementation of $\tau$, captionpos=b, frame=single, label=code]
while (x <= 100) {
  while (x < 100) x++;
  x = 0, y++;
}
    \end{lstlisting}
  \end{minipage}

  \vspace{-.2em}
  \noindent
  we have $(c_x,c_y) \to_{\tau} (c'_x, c'_y)$ iff $[x/c_x,y/c_y,x'/c'_x,y'/c'_y]$ is a model of $\tau$,
i.e., iff there is a step from a state with $x=c_x \land y=c_y$ to a state with $x=c'_x
\land y=c'_y$ in \Cref{code}. 
  To prove that an \emph{error state} satisfying $\psi_{\err} \Def y
  \geq 100$ is reachable from an \emph{initial state} which satisfies $\psi_{\init} \Def x \leq 0 \land y \leq 0$,
BMC has to unroll $\tau$ $10100$ times.

  Our new technique \emph{Accelerated} BMC (ABMC) uses the following \emph{acceleration}
  \begin{equation}
    n > 0 \land x + n \leq 100 \land x' = x + n \land y' = y \tag{\ensuremath{\tau_{\inner}^+}} \label{accel1}
  \end{equation}
  of $\tau_{x<100}$:
  As we have $(c_x,c_y) \to^+_{\tau_{x<100}} (c'_x, c'_y)$ iff
  $\ref{accel1}[x/c_x,y/c_y,x'/c'_x,y'/c'_y]$ is satisfiable, \ref{accel1} is a
  ``shortcut'' for many $\to_{\tau_{x<100}}$-steps.
  
To compute such a shortcut \ref{accel1} from the formula $\tau_{x<100}$, we use existing
acceleration techniques \cite{acceleration-calculus}.
  In the example above, $n$ serves as loop counter.
  Then the literal $x' = x + 1$ of $\tau_{x<100}$ gives rise to the recurrence equations
  $x^{(0)} = x$ and $x^{(n)} = x^{(n-1)} + 1$, which yield the closed form $x^{(n)} = x +
  n$, resulting in the literal $x' = x + n$ of \ref{accel1}.
  Thus, the literal $x + n \leq 100$ of \ref{accel1} is equivalent to $x^{(n-1)} < 100$.
  As $x$ is monotonically increasing (i.e., $\tau_{x<100}$ implies $x < x'$), $x^{(n-1)} <
  100$ implies $x^{(n-2)}< 100$, $x^{(n-3)}< 100$, \ldots, $x^{(0)} < 100$, i.e., the loop
  $\tau_{x<100}$ can indeed be executed $n$ times.

  So \ref{accel1} can simulate arbitrarily many steps with $\tau$ in a single step, as long as $x$ does not exceed $100$.
  Here, acceleration was applied to $\tau_{x<100}$, i.e., the projection of $\tau$ to the case $x < 100$, which corresponds to the \underline{\textsf{i}}nner loop of \Cref{code}.
  We also call such projections \emph{transitions}.
  Later, ABMC also accelerates the \underline{\textsf{o}}uter loop (consisting of
  $\tau_{x=100}$, $\tau_{x<100}$,  and \ref{accel1}), resulting in
  \begin{equation}
    n > 0 \land x = 100 \land 1 < x' \leq 100 \land y' = y + n. \tag{\ensuremath{\tau_{\outer}^+}} \label{accel2}
  \end{equation}
  For technical reasons, our algorithm accelerates $[\tau_{x=100}, \tau_{x<100}, \ref{accel1}]$ instead of just $[\tau_{x=100}, \ref{accel1}]$, so that \ref{accel2} requires $1 < x'$, i.e., it only covers cases where $\tau_{x<100}$ is applied at least twice after $\tau_{x=100}$.
  Details will be clarified in \Cref{sec:abmc}, see in particular \Cref{fig:abmc}.
  Using these shortcuts, ABMC can prove unsafety with bound~$7$.
\end{example}
While our main goal is to improve BMC's capability to find deep counterexamples, the following straightforward observations can be used to \emph{block} certain parts of the transition relation in ABMC:
\begin{enumerate}
\item After accelerating a sequence of transitions, the resulting accelerated transition should be preferred over that sequence of
  transitions.
\item If an accelerated transition has been used, then the corresponding sequence of transitions should not be used immediately afterwards.
\end{enumerate}
Both observations exploit that an accelerated transition describes the transitive closure of the relation induced by the corresponding sequence of transitions.
Due to its ability to block parts of the transition relation, ABMC is able to prove safety
in cases where BMC would unroll the transition relation indefinitely.

\vspace*{-.2cm}

\subsubsection{Outline}

After introducing preliminaries in \Cref{sec:preliminaries}, we show how to use acceleration in
order to improve the BMC algorithm to ABMC in \Cref{sec:From BMC to ABMC}.
To increase ABMCs capabilities for proving safety,
\Cref{sec:blocking} refines ABMC by integrating
blocking clauses. In \Cref{sec:related}, we discuss related work, and in \Cref{sec:experiments}, we evaluate our
implementation of ABMC in our tool \tool{LoAT}.

\section{Preliminaries}
\label{sec:preliminaries}

We assume familiarity with basics from many-sorted first-order logic \cite{enderton}.
Without loss of generality, we assume that all formulas are in negation normal form (NNF).
$\VV$ is a countably infinite set of variables and $\AA$ is a first-order theory over a $k$-sorted signature $\Sigma$ with carrier $\CC = (\CC_{1},\ldots,\CC_{k})$.
For each entity $e$, $\VV(e)$ is the set of variables that occur in $e$.
$\QF(\Sigma)$ denotes the set of all quantifier-free first-order formulas over $\Sigma$, and
$\QF_\land(\Sigma)$ only contains conjunctions of $\Sigma$-literals.
We let $\top$ and $\bot$ stand for ``true'' and ``false'', respectively.

Given $\psi \in \QF(\Sigma)$ with $\VV(\psi) = \vec{y}$, we say that $\psi$ is $\AA$-\emph{valid} (written $\models_\AA \psi$) if every model of $\AA$ satisfies the universal closure $\forall \vec{y}.\ \psi$ of $\psi$.
Moreover, $\sigma: \VV(\psi) \to \CC$ is an $\AA$-\emph{model} of $\psi$ (written $\sigma \models_\AA \psi$) if $\models_\AA \sigma(\psi)$, where $\sigma(\psi)$ results from $\psi$ by instantiating all variables according to $\sigma$.
If $\psi$ has an $\AA$-model, then $\psi$ is $\AA$-\emph{satisfiable}.
We write $\psi \models_\AA \psi'$ for $\models_\AA (\psi \implies \psi')$, and $\psi \equiv_\AA
\psi'$ means $\models_\AA (\psi \iff \psi')$.
In the sequel, we omit the subscript $\AA$, and we just say ``valid'', ``model'', and ``satisfiable''.
We assume that $\AA$ is complete, i.e., we either have $\models \psi$ or $\models \neg
\psi$ for every closed formula 
over $\Sigma$.

We write $\vec{x}$ for sequences and $x_i$ is the $i^{th}$ element of $\vec{x}$.
We use ``$\concat$'' for concatenation of sequences, where we identify sequences of length $1$ with their elements, so we may write, e.g., $x\concat\mathit{xs}$ instead of $[x]\concat\mathit{xs}$.

Let $d \in \NN$ be fixed, and let $\vec{x},\vec{x}' \in \VV^d$ be disjoint vectors of
pairwise different variables, called the \emph{pre-} and \emph{post-variables}.
Each $\tau \in \QF(\Sigma)$ induces a \emph{transition relation} $\to_\tau$ on $\CC^d$ where $\vec{s} \to_\tau \vec{t}$ iff $\tau[\vec{x}/\vec{s},\vec{x}'/\vec{t}]$ is satisfiable.
Here, $[\vec{x}/\vec{s},\vec{x}'/\vec{t}]$ denotes the substitution $\theta$ with
$\theta(x_i) = s_i$ and $\theta(x'_i) = t_i$ for all $1 \leq i \leq d$.
We refer to elements of $\QF(\Sigma)$ as \emph{transition formulas} whenever we are interested in their induced transition relation.
Moreover, we also refer to \emph{conjunctive} transition formulas (i.e., elements of $\QF_\land(\Sigma)$) as \emph{transitions}.
A \emph{safety problem} $\TT$ is a triple $(\psi_{\init}, \tau,\psi_{\err}) \in \QF(\Sigma) \times \QF(\Sigma) \times \QF(\Sigma)$ where $\VV(\psi_{\init}) \cup \VV(\psi_{\err}) \subseteq \vec{x}$.
It is \emph{unsafe} if there are $\vec{s},\vec{t} \in \CC^d$ such that $[\vec{x} / \vec{s}] \models \psi_\init$, $\vec{s} \to^*_\tau \vec{t}$, and $[\vec{x} / \vec{t}] \models \psi_\err$.

The \emph{composition} of $\tau$ and $\tau'$ is $\compose(\tau,\tau') \Def \tau[\vec{x}' / \vec{x}''] \land \tau'[\vec{x} / \vec{x}'']$ where $\vec{x}'' \in \VV^d$ is fresh.
Here, we assume $\VV(\tau) \cap \VV(\tau') \subseteq \vec{x} \cup \vec{x}'$ (which can be ensured by renaming other variables correspondingly).
So ${\to_{\compose(\tau,\tau')}} = {\to_\tau} \circ {\to_{\tau'}}$ (where ${\circ}$
denotes relational composition).
For finite sequences of transition formulas
we define $\compose([]) \Def (\vec{x} = \vec{x}')$ (i.e., $\to_{\compose([])}$ is the identity relation) and
$\compose(\tau\concat\vec{\tau}) \Def \compose(\tau,\compose(\vec{\tau}))$.
We abbreviate ${\to_{\compose(\vec{\tau})}}$ by ${\to_{\vec{\tau}}}$.

\emph{Acceleration techniques} compute the transitive closure of relations.
In the following definition, we only consider conjunctive transition formulas, since many
existing acceleration techniques do not support disjunctions \cite{bozga09a}, or
approximate in the presence 
of disjunctions \cite{acceleration-calculus}.
So the restriction to conjunctive formulas ensures that our approach works with arbitrary
existing acceleration techniques.

\begin{definition}[Acceleration]
  \label{def:accel}
  An \emph{acceleration technique} is a
  function $\accel: \QF_\land(\Sigma) \to \QF_\land(\Sigma')$ such that ${\to_{\accel(\tau)}} \subseteq {\to_{\tau}^+}$, where $\Sigma'$ is the signature of a first-order theory $\AA'$.
\end{definition}
We abbreviate $\accel(\compose(\vec{\tau}))$ by $\accel(\vec{\tau})$.
So as we aim at finding counterexamples, we allow under-approximating acceleration techniques, i.e., we do not require ${\to_{\accel(\tau)}} = {\to_{\tau}^+}$.
\Cref{def:accel} allows $\AA' \neq \AA$, as most theories are not ``closed under acceleration''.
For example, accelerating the following Presburger formula on the left may yield the non-linear formula on the right:
\[
  x' = x + y \land y' = y \qquad \qquad n > 0 \land x' = x + n \cdot y \land y' = y.
\]

\section{From BMC to ABMC}\label{sec:From BMC to ABMC}

In this section, we introduce accelerated bounded model checking.
To this end, we first recapitulate bounded model checking in \Cref{sec:bmc}.
Then we present ABMC in \Cref{sec:abmc}.
To implement ABMC efficiently, heuristics to decide when to perform acceleration are needed.
Thus, we present such a heuristic in \Cref{sec:tuning}.

\subsection{Bounded Model Checking}
\label{sec:bmc}

\begin{algorithm}[t]
  \caption{$\BMC$ -- Input: a safety problem $\TT = (\psi_\init,\tau,\psi_\err)$}\label{alg:bmc}
  $b \gets 0; \quad \add(\mu_b(\psi_\init))$\; \label{bmc:init}
  \While{$\top$}{
    $\push()$; \quad $\add(\mu_b(\psi_\err))$\;\label{bmc:err1}
    \leIf{\label{bmc:err2}$\checksat()$}{
      \Return{$\unsafe$}
    }{
      $\pop(); \quad \add(\mu_b(\tau))$ \label{bmc:pop}
    }
    \leIf{\label{bmc:safe}$\neg\checksat()$}{
      \Return{$\safe$}
    }{
      $b \gets b + 1$\label{bmc:inc}
    }
  }
\end{algorithm}
\Cref{alg:bmc} shows how to implement BMC on top of an incremental SMT solver.
In Line~\ref{bmc:init}, the description of the initial states is added to the SMT problem.
Here and in the following, 
for all $i \in \NN$ we define $\mu_i(x) \Def x^{(i)}$ if $x \in \VV \setminus \vec{x}'$,
and $\mu_i(x') = x^{(i+1)}$ if $x' \in \vec{x}'$. 
So in particular, we have  $\mu_i(\vec{x}) = \vec{x}^{(i)}$
and $\mu_i(\vec{x}') = \vec{x}^{(i+1)}$, where
we assume that $\vec{x}^{(0)},\vec{x}^{(1)},\ldots \in \VV^d$ are
disjoint vectors of pairwise different variables.
In the loop, we set a backtracking point with the ``$\push()$'' command and add a suitably
variable-renamed version of the description of the error states to the SMT problem in Line~\ref{bmc:err1}.
Then we check for satisfiability to see if an error state is reachable with the current bound in Line~\ref{bmc:pop}.
If this is not the case, the description of the error states is removed with the ``$\pop()$'' command that deletes all formulas from the SMT problem that have been added since the last backtracking point.
Then a variable-renamed version of the transition formula $\tau$ is added to the SMT problem.
If this results in an unsatisfiable problem in Line~\ref{bmc:safe}, then the whole search space has been exhausted, i.e., then $\TT$ is safe.
Otherwise, we enter the next iteration.

\begin{example}[BMC]
  \label{ex:bmc}
    For the first $100$ iterations of
    \Cref{alg:bmc} on  \Cref{ex:ex1}, all models found in Line~\ref{bmc:inc} satisfy the $1^{st}$ disjunct
  $\mu_b(\tau_{x<100})$ of $\mu_b(\tau)$. 
  Then we may have $x^{(100)} = 100$, so that the $2^{nd}$ disjunct $\mu_b(\tau_{x=100})$ of $\mu_b(\tau)$ applies once and we get $y^{(101)} = y^{(100)} + 1$.
  After another $100$ iterations, the $2^{nd}$ disjunct $\mu_b(\tau_{x=100})$ may apply
  again, etc.
  After $100$ applications of the $2^{nd}$ disjunct (and thus a total of $10100$ steps), there is a model with $y^{(10100)} = 100$, so that unsafety is proven.
\end{example}

\subsection{Accelerated Bounded Model Checking}
\label{sec:abmc}

To incorporate acceleration into BMC, we have to bridge the gap between (disjunctive) transition formulas and acceleration techniques, which require conjunctive transition formulas.
To this end, we use \emph{syntactic implicants}.
\begin{definition}[Syntactic Implicant Projection \cite{adcl}]
  Let $\tau \in \QF(\Sigma)$ be in NNF and assume $\sigma \models \tau$.
  We define the \emph{syntactic implicants} $\sip(\tau)$ of $\tau$ as follows:
  \begin{align*}
    \sip(\tau,\sigma) \Def \bigwedge \{\lambda  \mid \lambda \text{ is a literal of } \tau,
    \, \sigma \models \lambda\} \qquad
    \sip(\tau) \Def \{\sip(\tau,\sigma) \mid \sigma \models \tau\}
  \end{align*}
\end{definition}
Since $\tau$ is in NNF, $\sip(\tau,\sigma)$ implies $\tau$, and it is easy to see that
$\tau \equiv \bigvee \sip(\tau)$.
Whenever the call to the SMT solver in Line~\ref{bmc:safe} of \Cref{alg:bmc} yields $\sat$, the resulting model gives rise to a sequence of syntactic implicants, called the \emph{trace}.
To define the trace formally, note that when we integrate acceleration into $\BMC$, we may
not only add $\tau$ to the SMT formula as in Line~\ref{bmc:pop}, but also \emph{learned transitions} that result from acceleration.
Thus, the following definition allows for changing the transition formula.
In the sequel, $\circ$ also
denotes composition of substitutions, i.e., $\theta' \circ \theta \Def [x / \theta'(\theta(x)) \mid x \in \dom(\theta') \cup \dom(\theta)]$.

\begin{definition}[Trace]\label{def:trace}
  Let $[\tau_i]_{i=0}^{b-1}$ be a sequence of transition formulas and let $\sigma$ be a model of $\bigwedge_{i=0}^{b-1} \mu_i(\tau_i)$.
  Then the \emph{trace induced by $\sigma$} 
  is
  \[
    \trace_b(\sigma,[\tau_i]_{i=0}^{b-1}) \Def [\sip(\tau_i, \sigma \circ \mu_i)]_{i=0}^{b-1}.
  \]
  We write $\trace_b(\sigma)$ instead of $\trace_b(\sigma,[\tau_i]_{i=0}^{b-1})$ if $[\tau_i]_{i=0}^{b-1}$ is clear from the context.
\end{definition}
So each model $\sigma$ of $\bigwedge_{i=0}^{b-1} \mu_i(\tau_i)$ corresponds to a sequence
of steps with the relations $\to_{\tau_0}, \to_{\tau_1}, \ldots, \to_{\tau_{b-1}}$, and the trace
induced by $\sigma$ contains the syntactic implicants of the formulas $\tau_i$ that were used in this
sequence.
\begin{example}[Trace]
  Reconsider \Cref{ex:bmc}.
  After two iterations of the loop of \Cref{alg:bmc}, the SMT problem consists of the following
  formulas:

  \noindent
  \resizebox{\textwidth}{!}{
    \begin{minipage}{1.12\textwidth}
      \begin{align*}
        &x^{(0)} \leq 0 \land y^{(0)} \leq 0 \tag{$\psi_\init$} \\
        &
          (x^{(0)} < 100 \land x^{(1)} = x^{(0)} + 1 \land y^{(1)} = y^{(0)})
          \lor (x^{(0)} = 100 \land x^{(1)} = 0 \land y^{(1)} = y^{(0)} + 1) \tag{$\tau$}\\
        &
          (x^{(1)} < 100 \land x^{(2)} = x^{(1)} + 1 \land y^{(2)} = y^{(1)})
          \lor (x^{(1)} = 100 \land x^{(2)} = 0 \land y^{(2)} = y^{(1)} + 1) \tag{$\tau$}
      \end{align*}
    \end{minipage}
  }

  \vspace{\belowdisplayskip}
  \noindent
  With $\sigma = [x^{(i)}/i,y^{(i)}/0 \mid 0 \leq i \leq 2]$, we get $\trace_2(\sigma) = [\tau_{x<100},\tau_{x<100}]$, as:
  \begin{align*}
    \sip(\tau, \sigma \circ \mu_0) & {} = \sip(\tau, [x/0,y/0,x'/1,y'/0]) = \tau_{x<100} \\
    \sip(\tau, \sigma \circ \mu_1) & {} = \sip(\tau, [x/1,y/0,x'/2,y'/0]) = \tau_{x<100}
  \end{align*}
\end{example}
To detect situations where applying acceleration techniques pays off, we need to distinguish traces that contain loops from non-looping ones.
Since transition formulas are unstructured, the usual techniques for detecting loops
(based on, e.g., program syntax or control flow graphs) do not apply in our
setting.
Instead, we rely on the \emph{dependency graph} of the transition formula.

\begin{definition}[Dependency Graph]
  Let $\tau$ be a transition formula.
  Its \emph{dependency graph} $\DG = (V,E)$ is a directed graph whose vertices $V \Def \sip(\tau)$ are $\tau$'s syntactic implicants, and $\tau_1 \to \tau_2 \in E$ if $\compose(\tau_1, \tau_2)$ is satisfiable.
  We say that $\vec{\tau} \in \sip(\tau)^c$ is $\DG$-\emph{cyclic} if $c>0$ and $(\tau_1 \to
  \tau_{2}),\ldots,(\tau_{c-1} \to \tau_{c}),(\tau_c \to \tau_1) \in E$.
\end{definition}

\noindent
So intuitively, the syntactic implicants correspond to the different cases of $\to_\tau$, 
\begin{wrapfigure}[3]{r}{5cm}
  \vspace*{-.2cm}
  \hspace*{-.5cm}  \begin{tikzpicture}[bend angle=15]
      \node[draw=black] (A) at (0,0) {$\tau_{x<100}$};
      \node[draw=black] (B) at (3,0) {$\tau_{x=100}$};

      \path [->] (A) edge [loop left] node[left] {} (A);
      \path [->] (A.north east) edge [bend left] node[left] {} (B.north west);
      \path [->] (B.south west) edge [bend left] node[left] {} (A.south east);
     \end{tikzpicture}
     \end{wrapfigure}
and $\tau$'s dependency graph corresponds to the control flow graph of $\to_\tau$.
The dependency graph for \Cref{ex:ex1} is on the side.

However, as the size of $\sip(\tau)$ is worst-case exponential in the number of disjunctions in $\tau$, we do not compute $\tau$'s dependency graph eagerly.
Instead, ABMC maintains an under-approximation, i.e., a subgraph $\GG$ of the dependency
graph, which is extended whenever two transitions that are not yet connected by an edge
occur consecutively on the trace.
As soon as a $\GG$-cyclic suffix $\vec{\tau}^\circlearrowleft$ is detected on the trace, we may accelerate it.
Therefore, the trace may also contain the learned transition $\accel(\vec{\tau}^\circlearrowleft)$ in subsequent iterations.
Hence, to detect cyclic suffixes that contain learned transitions, they have to be represented in $\GG$ as well.
Thus, $\GG$ is in fact a subgraph of the dependency graph of $\tau \lor \bigvee \LL$, where $\LL$ is the set of all transitions that have been learned so far.

\begin{algorithm}[t]
  \caption{$\ABMC$ -- Input: a safety problem $\TT = (\psi_\init,\tau,\psi_\err)$}\label{alg:abmc}
  $b \gets 0; \quad V \gets \emptyset; \quad E \gets \emptyset; \quad \add(\mu_b(\psi_\init))$\; \label{abmc:init0}
  \leIf{\label{abmc:init2}$\neg\checksat()$}{
    \Return{$\safe$}
  }{
    $\sigma \gets \getmodel()$
  }
  \While{$\top$}{
     $\push()$; $\add(\mu_b(\psi_\err))$\; \label{abmc:err1}
    \leIf{\label{abmc:err2}$\checksat()$}{
      \Return{$\unsafe$}
    }{
      $\pop()$
    }
    $\vec{\tau} \gets \trace_b(\sigma); \quad V \gets V \cup \vec{\tau}; \quad E \gets E \cup \{(\tau_1,\tau_2) \mid [\tau_1,\tau_2] \text{ is an infix of } \vec{\tau}\}$\; \label{abmc:model}
    \lIf{\label{abmc:rec}$\vec{\tau} = \vec{\pi}\concat\vec{\pi}^\circlearrowleft \land \vec{\pi}^\circlearrowleft \text{ is cyclic} \land \shouldaccel(\vec{\pi}^\circlearrowleft)$}{
      $\add(\mu_b(\tau \lor \accel(\vec{\pi}^\circlearrowleft)))$ \label{abmc:add}
    }
    \lElse{$\add(\mu_b(\tau))$}\label{abmc:deepen}
    \leIf{\label{abmc:safe}$\neg\checksat()$}{
      \Return{$\safe$}
    } {
      $\sigma \gets \getmodel(); \quad b \gets b + 1$
    }
  }
  \end{algorithm}
This gives rise to the ABMC algorithm, which is shown in \Cref{alg:abmc}.
Here, we just write ``cyclic'' instead of $(V,E)$-cyclic.
The difference to \Cref{alg:bmc} can be seen in Lines~\ref{abmc:model} and \ref{abmc:add}.
In Line~\ref{abmc:model}, the trace is constructed from the current model.
Then, the approximation of the dependency graph is refined such that it contains vertices
for all elements of
the trace, and edges for consecutive elements of the trace.
In Line~\ref{abmc:rec}, a cyclic suffix of the trace may get accelerated, provided that the call to $\shouldaccel$ (which will be discussed
in detail in \Cref{sec:tuning}) returns $\top$.
In this way, in the next iteration the SMT solver can choose a model that satisfies $\accel(\vec{\pi}^\circlearrowleft)$ and thus
simulates several instead of just one $\to_\tau$-step.
Note, however, that we do \emph{not} update $\tau$ with $\tau \lor \accel(\vec{\pi}^\circlearrowleft)$.
So in every iteration, at most one learned transition is added to the SMT problem.
In this way, we avoid blowing up $\tau$ unnecessarily.
Note that we only accelerate ``real'' cycles $\vec{\pi}^\circlearrowleft$ where
$\compose(\vec{\pi}^\circlearrowleft)$ is satisfiable, since $\vec{\pi}^\circlearrowleft$ is a suffix of the trace, whose satisfiability is witnessed by $\sigma$.

As we rely on syntactic implicants and dependency graphs to detect cycles, $\ABMC$ is decoupled from the specific encoding of the input.
So for example, 
transition formulas may be represented in CNF, DNF,
or any other structure.

\Cref{fig:abmc} shows a run of \Cref{alg:abmc} on \Cref{ex:ex1}, where the formulas that
are added to the SMT problem are highlighted in \colorbox{mygray}{gray},
and $x^{(i)} \mapsto c$ abbreviates $\sigma(x^{(i)}) = c$.
For simplicity, we assume that $\shouldaccel$ always returns $\top$, and
the model $\sigma$ is only extended in each step, i.e.,  $\sigma(x^{(i)})$ and
$\sigma(y^{(i)})$ remain unchanged for all $0 \leq i < b$.
In general, the SMT solver can choose different values for $\sigma(x^{(i)})$ and $\sigma(y^{(i)})$ in every iteration.
On the right, we show the current bound $b$, and the formulas that give rise to the formulas on the left when renaming their variables suitably with $\mu_b$.
Initially, the approximation $\GG = (V,E)$ of the dependency graph is empty.
When $b=2$, the trace is $[\tau_{x < 100},\tau_{x < 100}]$, and the corresponding edge is added to $\GG$.
Thus, the trace has the cyclic suffix $\tau_{x < 100}$ and we accelerate it, resulting in \ref{accel1}, which is added to the SMT problem.
Then we obtain the trace $[\tau_{x < 100},\tau_{x < 100},\ref{accel1}]$, and the edge $\tau_{x < 100} \to \ref{accel1}$ is added to $\GG$.
Note that \Cref{alg:abmc} does not enforce the use of $\ref{accel1}$, so $\tau$ might still be unrolled instead, depending on the models found by the SMT solver.
We will address this issue in \Cref{sec:blocking}.

Next, $\tau_{x = 100}$ already applies with $b=4$ (whereas it only applied with $b=100$ in \Cref{ex:bmc}).
So the trace is $[\tau_{x < 100},\tau_{x < 100},\ref{accel1},\tau_{x = 100}]$, and the edge $\ref{accel1} \to \tau_{x = 100}$ is added to $\GG$.
Then we obtain the trace $[\tau_{x < 100},\tau_{x < 100},\ref{accel1},\tau_{x = 100},\tau_{x < 100}]$, and add $\tau_{x = 100} \to \tau_{x < 100}$ to $\GG$.
Since the suffix $\tau_{x < 100}$ is again cyclic, we accelerate it and add $\ref{accel1}$
to the SMT problem.
After one more step, the trace $[\tau_{x < 100},\tau_{x < 100},\ref{accel1},\tau_{x = 100},\tau_{x < 100},\ref{accel1}]$ has the cyclic suffix $[\tau_{x = 100},\tau_{x < 100},\ref{accel1}]$.
Accelerating it yields $\ref{accel2}$, which is added to the SMT problem.
Afterwards, unsafety can be proven with $b=7$.

Since using acceleration is just a heuristic to speed up BMC, all basic properties of BMC
immediately carry over to ABMC.
\begin{theorem}[Properties of ABMC]
  ABMC is
  \begin{description}
  \item[Sound:] If $\ABMC(\TT)$ returns $\mathsf{(un)safe}$, then $\TT$ is (un)safe.
  \item[Refutationally Complete:] If  $\TT$ is unsafe, then $\ABMC(\TT)$ returns $\unsafe$.
  \item[Non-Terminating:] If $\TT$ is safe, then $\ABMC(\TT)$ may not terminate.
  \end{description}
\end{theorem}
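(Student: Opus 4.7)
Each property hinges on the invariant that every formula $\tau'_i$ added to the SMT context in iteration $i$ -- either $\tau$ itself (\Cref{abmc:deepen}) or $\tau \lor \accel(\vec{\pi}^\circlearrowleft)$ (\Cref{abmc:add}) -- satisfies both $\to_\tau \subseteq \to_{\tau'_i}$ (because $\tau$ is a disjunct) and $\to_{\tau'_i} \subseteq \to_\tau^*$. The latter inclusion follows from \Cref{def:accel} by induction on the iteration count, since a cyclic suffix $\vec{\pi}^\circlearrowleft$ may traverse previously learned transitions whose relations, by the inductive hypothesis, are already contained in $\to_\tau^+$. Both inclusions together drive the soundness and completeness arguments.

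For soundness of $\unsafe$, I read off the satisfying model $\sigma$ found in \Cref{abmc:err2}: letting $\vec{s}_i \Def \sigma(\vec{x}^{(i)})$, one obtains $\vec{s}_0 \models \psi_\init$, $\vec{s}_b \models \psi_\err$, and $\vec{s}_i \to_{\tau'_i} \vec{s}_{i+1}$ for every $0 \leq i < b$, hence $\vec{s}_0 \to_\tau^* \vec{s}_b$. For soundness of $\safe$, suppose ABMC exits at \Cref{abmc:safe} with $b = k$. Since $\tau \models \tau'_i$ for every $i$, unsatisfiability of the accumulated formula implies unsatisfiability of its strengthening obtained by replacing every $\tau'_i$ with $\tau$; thus no $\to_\tau$-path of length $k+1$ starts in an initial state, and by prefix closure no $\to_\tau$-path of length greater than $k$ does either. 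The error checks in \Cref{abmc:err2} succeeded with $\unsat$ for $b = 0, 1, \ldots, k$, which by the same strengthening rules out $\to_\tau$-paths of length at most $k$ from $\psi_\init$ to $\psi_\err$; combining both gives safety. The exit at \Cref{abmc:init2} is the degenerate case $k = 0$.

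For refutational completeness, assume $\TT$ is unsafe and fix a shortest witness $\vec{s}_0 \to_\tau \cdots \to_\tau \vec{s}_k$. Soundness of $\safe$ forbids the $\safe$ exit, so ABMC either returns $\unsafe$ in some iteration $\leq k$ (and we are done) or reaches iteration $b = k$. There, setting $\sigma(\vec{x}^{(i)}) \Def \vec{s}_i$ and extending arbitrarily on any auxiliary variables introduced by an $\accel(\vec{\pi})$-disjunct, the inclusion $\to_\tau \subseteq \to_{\tau'_i}$ makes the query in \Cref{abmc:err2} satisfiable, so $\unsafe$ is returned. One must additionally verify that no iteration stalls: each $\checksat$ terminates under the standard decidability assumption on $\AA$, and the only loop exits are through \Cref{abmc:err2} and \Cref{abmc:safe}, the latter impossible here.

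Non-termination is witnessed by the trivially safe instance $\psi_\init \Def (x = 0)$, $\tau \Def (x' = x + 1)$, $\psi_\err \Def \bot$: every error check is unsatisfiable because $\psi_\err$ is, every transition extension is satisfiable, and no acceleration of $\tau$ can make the accumulated context inconsistent, so the outer loop runs forever. The main obstacle I anticipate is the bookkeeping in the $\safe$ case of soundness, where one must simultaneously track that every relevant error check has been dispatched and that the weakened transitions $\tau'_i$ still suffice to exclude longer $\to_\tau$-paths; the inductive justification of $\to_{\tau'_i} \subseteq \to_\tau^*$ for cycles containing previously learned transitions is the one nonroutine ingredient.
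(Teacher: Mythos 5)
Your proposal is correct and takes essentially the same route as the paper, which justifies this theorem with the single observation that acceleration is a sound heuristic layered on top of BMC: your two inclusions ${\to_\tau} \subseteq {\to_{\tau'_i}}$ and ${\to_{\tau'_i}} \subseteq {\to_\tau^*}$ (the latter by induction over previously learned transitions, via \Cref{def:accel}) are precisely the formalization of the paper's remark that ``all basic properties of BMC immediately carry over to ABMC.'' (The paper's detailed appendix proof concerns the harder blocking-clause variant $\ABMC_\block$, not this theorem, so your fleshed-out argument is a faithful elaboration rather than a divergence.)
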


\begin{figure}[t]
    \begin{forest}
      for tree={%
        folder,
        grow'=0,
        fit=band,
        s sep=0mm,
        l sep=1em
      }
      [$\text{ABMC}(\TT)$
      [\protect{\begin{minipage}{0.87\textwidth}\ref{abmc:init0}: \colorbox{mygray}{$x^{(0)} \leq 0 \land y^{(0)} \leq 0$}\end{minipage} $\hfill \mathllap{\psi_\init, {}}b=0$}]
      [\protect{\begin{minipage}{0.87\textwidth}\parbox{2.5em}{\ref{abmc:init2} \& \ref{abmc:model}:}
            \parbox{0.27\textwidth}{$x^{(0)} \mapsto 0, y^{(0)} \mapsto 0$} $||$ \parbox{0.21\textwidth}{$\vec{\tau} \gets []$} $||$ \parbox{0.37\textwidth}{$E \gets \emptyset$}\end{minipage}}]
      [\protect{\begin{minipage}{0.87\textwidth}\ref{abmc:deepen}: \colorbox{mygray}{$(x^{(0)} < 100 \land x^{(1)} = x^{(0)} + 1 \land y^{(1)} = y^{(0)}) \lor \ldots$}\end{minipage} $\hfill \tau$}]
      [\protect{\begin{minipage}{0.87\textwidth}\parbox{2.5em}{\ref{abmc:model} \& \ref{abmc:safe}:}
            \parbox{0.27\textwidth}{$x^{(1)} \mapsto 1, y^{(1)} \mapsto 0$} $||$ \parbox{0.21\textwidth}{$\vec{\tau} \gets [\tau_{x<100}]$} $||$ \parbox{0.37\textwidth}{$E \gets \emptyset$}\end{minipage} \hfill $b=1$}]
      [\protect{\begin{minipage}{0.87\textwidth}\ref{abmc:deepen}: \colorbox{mygray}{$(x^{(1)} < 100 \land x^{(2)} = x^{(1)} + 1 \land y^{(2)} = y^{(1)}) \lor \ldots$}\end{minipage} $\hfill \tau$}]
      [\protect{\begin{minipage}{0.87\textwidth}\parbox{2.5em}{\ref{abmc:model} \& \ref{abmc:safe}:}
            \parbox{0.27\textwidth}{$x^{(2)} \mapsto 2, y^{(2)} \mapsto 0$} $||$ \parbox{0.21\textwidth}{$\vec{\tau} \gets \vec{\tau}::\tau_{x<100}$} $||$ \parbox{0.37\textwidth}{$E \gets \{\tau_{x<100} \to \tau_{x<100}\}$}\end{minipage} \hfill $b=2$}]
      [\protect{\begin{minipage}{0.87\textwidth}\ref{abmc:add}: \colorbox{mygray}{$\ldots \lor (n^{(2)} > 0 \land x^{(2)} + n^{(2)} \leq 100 \land x^{(3)} = x^{(2)} + n^{(2)} \land y^{(3)} = y^{(2)})$}\end{minipage} $\hfill \tau \lor \ref{accel1}$}]
      [\protect{\begin{minipage}{0.87\textwidth}\parbox{2.5em}{\ref{abmc:model} \& \ref{abmc:safe}:}
            \parbox{0.27\textwidth}{$x^{(3)} \mapsto 100, y^{(3)} \mapsto 0$} $||$ \parbox{0.21\textwidth}{$\vec{\tau} \gets \vec{\tau}::\ref{accel1}$} $||$ \parbox{0.37\textwidth}{$E \gets E \cup \{\tau_{x<100} \to \ref{accel1}\}$}\end{minipage} \hfill $b=3$}]
      [\protect{\begin{minipage}{0.87\textwidth}\ref{abmc:deepen}: \colorbox{mygray}{$\ldots \lor (x^{(3)} = 100 \land x^{(4)} = 0 \land y^{(4)} = y^{(3)} + 1)$}\end{minipage} $\hfill \tau$}]
      [\protect{\begin{minipage}{0.87\textwidth}\parbox{2.5em}{\ref{abmc:model} \& \ref{abmc:safe}:}
            \parbox{0.27\textwidth}{$x^{(4)} \mapsto 0, y^{(4)} \mapsto 1$} $||$ \parbox{0.21\textwidth}{$\vec{\tau} \gets \vec{\tau}::\tau_{x=100}$} $||$ \parbox{0.37\textwidth}{$E \gets E \cup \{\ref{accel1} \to \tau_{x=100}\}$}\end{minipage} \hfill $b=4$}]
      [\protect{\begin{minipage}{0.87\textwidth}\ref{abmc:deepen}: \colorbox{mygray}{$(x^{(4)} < 100 \land x^{(5)} = x^{(4)} + 1 \land y^{(5)} = y^{(4)}) \lor \ldots$}\end{minipage} $\hfill \tau$}]
      [\protect{\begin{minipage}{0.87\textwidth}\parbox{2.5em}{\ref{abmc:model} \& \ref{abmc:safe}:}
            \parbox{0.27\textwidth}{$x^{(5)} \mapsto 1, y^{(5)} \mapsto 1$} $||$ \parbox{0.21\textwidth}{$\vec{\tau} \gets \vec{\tau}::\tau_{x<100}$} $||$ \parbox{0.37\textwidth}{$E \gets E \cup \{\tau_{x=100} \to \tau_{x < 100}\}$}\end{minipage} \hfill $b=5$}]
      [\protect{\begin{minipage}{0.87\textwidth}\ref{abmc:add}: \colorbox{mygray}{$\ldots \lor (n^{(5)} > 0 \land x^{(5)} + n^{(5)} \leq 100 \land x^{(6)} = x^{(5)} + n^{(5)} \land y^{(6)} = y^{(5)})$}\end{minipage} $\hfill \tau \lor \ref{accel1}$}]
      [\protect{\begin{minipage}{0.87\textwidth}\parbox{2.5em}{\ref{abmc:model} \& \ref{abmc:safe}:}
            \parbox{0.27\textwidth}{$x^{(6)} \mapsto 100, y^{(6)} \mapsto 1$} $||$ \parbox{0.21\textwidth}{$\vec{\tau} \gets \vec{\tau}::\ref{accel1}$} $||$ \parbox{0.37\textwidth}{$E \gets E$}\end{minipage} \hfill $b=6$}]
      [\protect{\begin{minipage}{0.87\textwidth}\ref{abmc:add}: \colorbox{mygray}{$\ldots \lor (n^{(6)} > 0 \land x^{(6)} = 100 \land 1 < x^{(7)} \leq 100 \land y^{(7)} = y^{(6)} + n^{(6)})$}\end{minipage} $\hfill \tau \lor \ref{accel2}$}
      [\protect{\begin{minipage}{\dimexpr0.87\textwidth-1em}\ref{abmc:err1}: \colorbox{mygray}{$y^{(7)} \geq 100$}\end{minipage} \hfill $b=7$}]
      [\ref{abmc:err2}: \protect{$\unsafe$}]]
      ]
    \end{forest}
  \caption{Running ABMC on \Cref{ex:ex1}}
  \label{fig:abmc}
\end{figure}

\subsection{Fine Tuning Acceleration}
\label{sec:tuning}

We now discuss $\shouldaccel$, our heuristic for applying acceleration.
To explain the intuition of our heuristic, 
we assume that acceleration does not approximate and thus ${\to_{\accel(\vec{\tau})}} = {\to^+_{\vec{\tau}}}$, but in our implementation, we also use it if ${\to_{\accel(\vec{\tau})}} \subset {\to^+_{\vec{\tau}}}$.
This is uncritical for correctness, as using acceleration in \Cref{alg:abmc} is \emph{always} sound.

First, acceleration should be applied to cyclic suffixes consisting of a single
\emph{original} (i.e., non-learned) transition.
However, applying acceleration to a single learned transition is pointless, as
\[
  {\to_{\accel(\accel(\tau))}} = {\to_{\accel(\tau)}^+} = {(\to^+_{\tau})^+} = {\to^+_{\tau}} = {\to_{\accel(\tau)}}.
\]
\vspace{-1em}
\begin{requirement}
  \label{req2}
  $\shouldaccel([\pi]) = \top$ iff $\pi \in \sip(\tau)$. 
\end{requirement}
Next, for every cyclic sequence $\vec{\pi}$, we have
 \begin{align*}
   {\to_{\accel(\vec{\pi}::\accel(\vec{\pi}))}} = {\to^+_{\vec{\pi}::\accel(\vec{\pi})}} = ({\to_{\vec{\pi}} \circ \to_{\vec{\pi}}^+})^+  = {\to_{\vec{\pi}} \circ \to_{\vec{\pi}}^+} = {\to_{\vec{\pi}::\accel(\vec{\pi})}},
 \end{align*}
and thus accelerating $\vec{\pi}::\accel(\vec{\pi})$ is pointless, too.
More generally, we want to prevent acceleration of sequences $\vec{\pi}_2::\accel(\vec{\pi})::\vec{\pi}_1$ where $\vec{\pi} = \vec{\pi}_1::\vec{\pi}_2$
as
\[
  {\to_{\vec{\pi}_2::\accel(\vec{\pi})::\vec{\pi}_1}^2} = {\to_{\vec{\pi}_2::\accel(\vec{\pi})::\vec{\pi}::\accel(\vec{\pi})::\vec{\pi}_1}} \subseteq {\to_{\vec{\pi}_2::\accel(\vec{\pi})::\vec{\pi}_1}}
\]
and thus ${\to_{\accel(\vec{\pi}_2::\accel(\vec{\pi})::\vec{\pi}_1)}} =
{\to_{\vec{\pi}_2::\accel(\vec{\pi})::\vec{\pi}_1}^+} =
{\to_{\vec{\pi}_2::\accel(\vec{\pi})::\vec{\pi}_1}}$.
So in general, the cyclic suffix of such a trace consists of a cycle $\vec{\pi}$ and $\accel(\vec{\pi})$, but it does not necessarily start with either of them.
To take this into account, we rely on the notion of \emph{conjugates}.
\begin{definition}[Conjugate]
  We say that two vectors $\vec{v},\vec{w}$ are \emph{conjugates} (denoted $\vec{v} \equiv_\circ \vec{w}$) if $\vec{v}=\vec{v}_1 \concat \vec{v}_2$ and $\vec{w}=\vec{v}_2 \concat \vec{v}_1$.
\end{definition}
So a conjugate of a cycle corresponds to the same cycle with another entry point.$\!\!$
\begin{requirement}
  \label{req4}
  $\shouldaccel(\vec{\pi}') = \bot$ if $\vec{\pi}'
  \equiv_\circ \vec{\pi} \concat \accel(\vec{\pi})$ for some $\vec{\pi}$.
\end{requirement}
In general, however, we also want to accelerate cyclic suffixes that contain learned
transitions to deal with nested loops, as in the last acceleration step of \Cref{fig:abmc}.
\begin{requirement}
  \label{req5}
  $\shouldaccel(\vec{\pi}') = \top$ if $\vec{\pi}' \not\equiv_\circ \vec{\pi}::\accel(\vec{\pi})$ for all $\vec{\pi}$.
\end{requirement}
\Cref{req2,req4,req5} give rise to a complete specification
for $\shouldaccel$:
If the cyclic suffix is a singleton, the decision is made based on \Cref{req2},
and otherwise the decision is made based on \Cref{req4,req5}.
However, this specification misses one important case:
Recall that the trace was $[\tau_{x<100},\tau_{x<100}]$ before acceleration was applied for the first time in \Cref{fig:abmc}.
While both $[\tau_{x<100}]$ and
$[\tau_{x<100},\tau_{x<100}]$ are cyclic, the latter should not be accelerated, since the formula $\accel([\tau_{x<100},\tau_{x<100}])$ corresponds to a special case of \ref{accel1} that only represents an even number of steps with $\tau_{x<100}$.
Here, the problem is that the cyclic suffix contains a \emph{square}, i.e., two
adjacent repetitions of the same non-empty sub-sequence.
\begin{requirement}
  \label{req6}
  $\shouldaccel(\vec{\pi}) = \bot$ if $\vec{\pi}$ contains a square.
\end{requirement}
Thus, $\shouldaccel(\vec{\pi}')$ yields $\top$ iff the following holds:
\begin{align*}
  (|\vec{\pi}'| = 1 \land \vec{\pi}' \in \sip(\tau)) \lor (
                                   |\vec{\pi}'| > 1 \land \vec{\pi}' \text{ is square-free} \land \forall \vec{\pi}.\ (\vec{\pi}' \not\equiv_\circ \vec{\pi}\concat\accel(\vec{\pi})))
\end{align*}
All properties that are required to implement $\shouldaccel$ can easily be checked automatically.
To check $\vec{\pi}' \not\equiv_\circ \vec{\pi} \concat \accel(\vec{\pi})$, our implementation maintains a map from learned transitions to
the corresponding cycles that have been accelerated.

However, to implement \Cref{alg:abmc}, there is one more missing piece:
As the choice of the cyclic suffix 
in Line~\ref{abmc:rec} is non-deterministic, a heuristic for choosing it is required.
In our implementation, we choose the \emph{shortest} cyclic suffix such that $\shouldaccel$ returns $\top$.
The reason is that, as observed in \cite{adcl}, accelerating short cyclic suffixes before longer ones allows for learning more general transitions.

\section{Guiding ABMC with Blocking Clauses}
\label{sec:blocking}

As mentioned in \Cref{sec:abmc}, \Cref{alg:abmc} does not enforce the use of learned transitions.
Thus, depending on the models found by the SMT solver, ABMC may behave just like BMC.
We now improve ABMC by integrating \emph{blocking clauses} that prevent it from unrolling loops instead of using learned transitions.
Here, we again assume ${\to_{\accel(\vec{\tau})}} = {\to^+_{\vec{\tau}}}$, i.e., that acceleration does not approximate.
Otherwise, blocking clauses are only sound for proving unsafety, but not for proving safety.

Blocking clauses exploit the following straightforward observation:
If the learned transition $\tau_\ell = \accel(\vec{\pi}^\circlearrowleft)$ has been added to the SMT problem with bound $b$ and an error state can be
reached via a trace with prefix
\[
  \vec{\pi} = [\tau_0,\ldots,\tau_{b-1}]::\vec{\pi}^\circlearrowleft \qquad \text{or} \qquad \vec{\pi}' = [\tau_0,\ldots,\tau_{b-1},\tau_\ell]::\vec{\pi}^\circlearrowleft,
  \]
then an error state can also be reached via a trace with the prefix
$[\tau_0,\ldots,\tau_{b-1},\tau_\ell]$, which
is not continued with $\vec{\pi}^\circlearrowleft$.
Thus, we may remove traces of the form $\vec{\pi}$ and $\vec{\pi}'$ from the search space by modifying the SMT problem accordingly.

To do so, we assign a unique identifier to each learned transition, and we introduce a
fresh integer-valued variable $\ell$ which is set to the corresponding identifier whenever
a learned transition is used, and to $0$, otherwise.
\begin{example}[Blocking Clauses]
  Reconsider \Cref{fig:abmc} and assume that we modify $\tau$ by conjoining $\ell = 0$, and \ref{accel1} by conjoining $\ell = 1$.
  Thus, we now have
  \begin{align*}
    \tau_{x<100} & {} \equiv x < 100 \land x' = x + 1 \land y' = y \land \ell = 0 & \text{and} \\
    \tau_{\inner}^+ & {} \equiv n > 0 \land x + n \leq 100 \land x' = x + n \land y' = y \land \ell = 1.
  \end{align*}
  When $b=2$, the trace is $[\tau_{x<100},\tau_{x<100}]$, and in the next iteration, it may be extended to either $\vec{\pi} = [\tau_{x<100},\tau_{x<100},\tau_{x<100}]$ or $\vec{\tau} = [\tau_{x<100},\tau_{x<100},\tau_{\inner}^+]$.
  However, as ${\to_{\tau_{\inner}^+}} = {\to^+_{\tau_{x < 100}}}$, we have ${\to_{\vec{\pi}}} \subseteq {\to_{\vec{\tau}}}$, so the entire search space can be covered without considering the trace $\vec{\pi}$.
  Thus, we add the blocking clause
  \[
    \neg\mu_2(\tau_{x < 100}) \tag{$\beta_1$}
  \]
  to the SMT problem to prevent ABMC from finding a model that gives rise to the trace $\vec{\pi}$.
  Note that we have $\mu_2(\tau_{\inner}^+) \models \beta_1$, as $\tau_{x<100} \models \ell = 0$ and $\tau_{\inner}^+ \models \ell \neq 0$.
  Thus, $\beta_1$ blocks $\tau_{x < 100}$ for the third step, but $\tau_{\inner}^+$ can still be used without restrictions.
  Therefore, adding $\beta_1$ to the SMT problem does not prevent us from covering the entire search space.

  Similarly, we have ${\to_{\vec{\pi}'}} \subseteq {\to_{\vec{\tau}}}$ for $\vec{\pi}' = [\tau_{x<100},\tau_{x<100},\tau_{\inner}^+,\tau_{x<100}]$.
  Thus, we also add the following blocking clause to the SMT problem:
  \[
    \ell^{(2)} \neq 1 \lor \neg\mu_3(\tau_{x < 100}) \tag{$\beta_2$}
  \]
\end{example}
\begin{algorithm}[t]
  \caption{$\ABMC_\block$ -- Input: a safety problem $\TT = (\psi_\init,\tau,\psi_\err)$}\label{alg:abmc-blocking}
  $b \gets 0; \ V \gets \emptyset; \ E \gets \emptyset; \ \id \gets 0; \ \tau \gets \tau \land \ell = 0; \cache \gets \emptyset; \ \add(\mu_b(\psi_\init))$ \label{abmcb:init}\;
  \leIf{$\neg\checksat()$}{
    \Return{$\safe$}
  }{
    $\sigma \gets \getmodel()$
  }
  \While{$\top$}{
     $\push()$; \quad $\add(\mu_b(\psi_\err))$\;
    \leIf{$\checksat()$\label{abmcb:err}}{
      \Return{$\unsafe$}
    }{
      $\pop()$
    }
    $\vec{\tau} \gets \trace_b(\sigma); \quad V \gets V \cup \vec{\tau}; \quad E \gets E \cup \{(\tau_1,\tau_2) \mid [\tau_1,\tau_2] \text{ is an infix of } \vec{\tau}\}$\;
    \If{$\vec{\tau} = \vec{\pi}\concat\vec{\pi}^\circlearrowleft \land \vec{\pi}^\circlearrowleft \text{ is $(V,E)$-cyclic} \land \shouldaccel(\vec{\pi}^\circlearrowleft)$ \label{abmcb:accel-cond}}{
      \If{$\exists \tau_c.\ (\vec{\pi}^\circlearrowleft, \tau_c) \in \cache$}{$\tau_\ell \gets \tau_c$ \label{abmcb:cache}\tcp*{the result of accelerating $\vec{\pi}^\circlearrowleft$ was cached}}
      \Else{
        $\id \gets \id + 1$; \ $\tau_\ell \gets \accel(\vec{\pi}^\circlearrowleft) \land \ell = \id$\tcp*{generate new ID and accelerate} \label{abmcb:accel}
        $\cache \gets \cache \cup \{(\vec{\pi}^\circlearrowleft, \tau_\ell)\}$\tcp*{update cache} \label{abmcb:learn}
      }
      $\beta_1 \gets \neg\left(\bigwedge_{i=0}^{|\vec{\pi}^\circlearrowleft|-1} \mu_{b+i}(\pi_i^\circlearrowleft) \right)$\tcp*{neither unroll $\vec{\tau}^\circlearrowleft$ right now...}\label{abmcb:block1}
      $\beta_2 \gets \ell^{(b)} \neq \id \lor \neg\left(\bigwedge_{i=0}^{|\vec{\pi}^\circlearrowleft|-1} \mu_{b+i+1}(\pi_i^\circlearrowleft)\right)$ \label{abmcb:block2}\tcp*{...nor after using the}
      $\add(\mu_{b}(\tau \lor \tau_\ell) \land \beta_1 \land \beta_2)$ \label{abmcb:add}\hspace{7em}\tcp*{accelerated transition}
    }
    \lElse{$\add(\mu_b(\tau))$  \label{abmcb:add-alternative}}
    \leIf{$\neg\checksat()$ \label{abmcb:sat-check}}{
      \Return{$\safe$}
    }{
      $\sigma \gets \getmodel(); \quad b \gets b + 1$ \label{abmcb:inc}
    }
  }
\end{algorithm}
ABMC with blocking clauses can be seen in \Cref{alg:abmc-blocking}.
The counter $\id$ is used to obtain unique identifiers for learned transitions.
Thus, it is initialized with $0$ (Line~\ref{abmcb:init}) and incremented whenever
a new transition is learned (Line~\ref{abmcb:accel}).
Moreover, as explained above, $\ell = 0$ is conjoined to $\tau$ (Line~\ref{abmcb:init}), and $\ell = \id$ is conjoined to each learned transition (Line~\ref{abmcb:accel}).

In Lines \ref{abmcb:block1} and \ref{abmcb:block2}, the blocking clauses $\beta_1$ and $\beta_2$ which correspond to
the superfluous traces $\vec{\pi}$ and $\vec{\pi}'$ above are created, and they are added
to the SMT problem in Line~\ref{abmcb:add}. Here, $\pi_i^\circlearrowleft$ denotes the
$i^{th}$ transition in the sequence $\vec{\pi}^\circlearrowleft$.

Importantly, \Cref{alg:abmc-blocking} caches (Line~\ref{abmcb:learn}) and reuses (Line~\ref{abmcb:cache}) learned transitions.
In this way, the learned transitions that are conjoined to the SMT problem have the same
$\id$ if they stem from the same cycle, and thus the blocking clauses $\beta_1$ and
$\beta_2$ can also block
sequences $\vec{\pi}^\circlearrowleft$
that contain learned transitions, as shown in the following example.

\begin{example}[Caching]\label{ex:caching}
  Let $\tau$ have the dependency graph given below. As \Cref{alg:abmc-blocking}
  \begin{minipage}{0.4\textwidth}
    \begin{tikzpicture}[bend angle=15]
      \node[draw=black] (A) at (0,0) {$\tau_1$};
      \node[draw=black] (B) at (2,0) {$\tau_2$};
      \node[draw=black] (C) at (4,0) {$\tau_3$};

      \path [->] (A) edge node[left] {} (B);
      \path [->] (B) edge[loop above] node[left] {} (B);
      \path [->] (B.north east) edge [bend left] node[left] {} (C.north west);
      \path [->] (C.south west) edge [bend left] node[left] {} (B.south east);
      \path [->] (C.north) edge [bend right=22] node[left] {} (A.north east);
    \end{tikzpicture}
  \end{minipage}
  \begin{minipage}{0.59\textwidth}
 conjoins $\ell = 0$ to $\tau$, assume $\tau_i \models \ell = 0$ for all $i \in \{1,2,3\}$.
  Moreover, assume that accelerating $\tau_2$ yields $\tau_2^+$ with $\tau_2^+ \models
  \ell = 1$. If we get the trace
  \end{minipage}

  \vspace{-0.1em}
  \noindent
  $[\tau_1,\tau_2^+,\tau_3]$, it can be accelerated.
  Thus, \Cref{alg:abmc-blocking} would add
  \[
    \beta_1 \equiv \neg\left(\mu_3(\tau_1) \land \mu_4(\tau^+_2) \land \mu_5(\tau_3)\right)
  \]
  to the SMT problem.
  If the next step yields the trace $[\tau_1,\tau_2^+,\tau_3,\tau_2]$, then $\tau_2$ is accelerated again.
  Without caching, acceleration may yield a new transition $\tau_{2'}^+$ with $\tau_{2'}^+ \models \ell = 2$.
  As the SMT solver may choose a different model in every iteration, the trace may also change in every iteration.
  So after two more steps, we could get the trace
  $[\tau_1,\tau_2^+,\tau_3,\tau_1,\tau_{2'}^+,\tau_3]$.
  At this point, the ``outer'' loop consisting of $\tau_1$, arbitrarily many repetitions of $\tau_2$, and $\tau_3$, has been unrolled a second time, which should have been prevented by $\beta_1$.
  The reason is that $\tau_{2}^+ \models \ell = 1$, whereas $\tau_{2'}^+ \models \ell = 2$, and thus $\tau_{2'}^+ \models \neg\tau^+_2$.
  With caching, we again obtain $\tau_2^+$ when $\tau_2$ is accelerated for the second time, such that this problem is avoided.
\end{example}
Remarkably, blocking clauses allow us to prove safety in cases where BMC fails.
\begin{example}[Proving Safety with Blocking Clauses]\label{ex:Proving Safety with Blocking Clauses}
  Consider the safety problem $(x \leq 0,\tau,x>100)$ with $\tau \equiv x < 100 \land x' = x + 1$.
  \Cref{alg:bmc} cannot prove its safety, as $\tau$ can
  be unrolled arbitrarily often (by choosing smaller and smaller initial values for $x$).
  With \Cref{alg:abmc-blocking}, we obtain the following SMT problem with $b=3$.
  \begin{align*}
    & \phantom{\neg}\mu_0(x \leq 0) \tag{initial states}\\
    & \phantom{\neg}\mu_0(\tau \land \ell = 0)\tag{$\tau$} \\
 & \phantom{\neg}\mu_1(\tau \land \ell = 0) \tag{$\tau$} \\
    & \neg\mu_2(\tau \land \ell = 0) \tag{$\beta_1$} \\
    & \phantom{\neg}\ell^{(2)} \neq 1 \lor \neg\mu_3(\tau \land \ell = 0) \tag{$\beta_2$} \\
    & \phantom{\neg}\mu_2((\tau \land \ell = 0) \lor (n > 0 \land x + n \leq 100 \land x' = x + n \land \ell = 1)) \tag{$\tau \lor \accel(\tau)$} \label{eq:ex-safe-accel} \\
    & \phantom{\neg}\mu_3(\tau \land \ell = 0) \tag{$\tau$}
  \end{align*}
  From the last formula and $\beta_2$, we get $\ell^{(2)} \neq 1$, but
 the formula labeled with \eqref{eq:ex-safe-accel} and $\beta_1$ imply $\mu_2(\ell = 1) \equiv \ell^{(2)} = 1$, resulting in a contradiction.
  Thus, due to the blocking clauses,
  $\ABMC_\block$ can prove safety with the bound $b=3$.
\end{example}

Like $\ABMC$, $\ABMC_\block$ preserves $\BMC$'s main properties
(see \cite{report} for a proof).
\begin{restatable}
  {theorem}{properties}
  \label{thm:properties}
  $\ABMC_\block$ is sound and refutationally complete, but non-terminating.
\end{restatable}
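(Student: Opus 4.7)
The plan is to prove the three properties in turn, using the assumption $\to_{\accel(\vec{\tau})} = \to^+_{\vec{\tau}}$ (stated just before the theorem), which is what justifies introducing the blocking clauses $\beta_1$ and $\beta_2$.

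For \textbf{soundness of the unsafe verdict}, when line~\ref{abmcb:err} returns sat at bound $b$, the model yields values $\vec{s}^{(0)},\ldots,\vec{s}^{(b)}$ with $[\vec{x}/\vec{s}^{(0)}] \models \psi_\init$, $[\vec{x}/\vec{s}^{(b)}] \models \psi_\err$, and $\vec{s}^{(i)} \to_{\tau_i} \vec{s}^{(i+1)}$ for each $i$, where each $\tau_i$ is either $\tau$ or a cached learned transition. By \Cref{def:accel}, $\to_{\accel(\vec{\pi}^\circlearrowleft)} \subseteq \to^+_\tau$, so unfolding each learned step into its underlying chain of $\tau$-steps gives $\vec{s}^{(0)} \to^*_\tau \vec{s}^{(b)}$, hence $\TT$ is unsafe.

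Both \textbf{soundness of the safe verdict} and \textbf{refutational completeness} rest on a single subsumption lemma: every trace ruled out by some $\beta_1$ or $\beta_2$ reaches only states already reachable via a trace the SMT problem still admits---namely one in which the blocked occurrence of $\vec{\pi}^\circlearrowleft$ is folded into the learned transition $\tau_\ell = \accel(\vec{\pi}^\circlearrowleft)$. This relies on $\to_{\tau_\ell} = \to^+_{\vec{\pi}^\circlearrowleft}$, which yields $\to_{\vec{\pi}^\circlearrowleft} \subseteq \to_{\tau_\ell}$ (for $\beta_1$) and $\to_{\tau_\ell} \circ \to_{\vec{\pi}^\circlearrowleft} \subseteq \to_{\tau_\ell}$ (for $\beta_2$). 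Given the lemma, unsatisfiability at line~\ref{abmcb:sat-check} implies that the reachable state space has been exhausted---every blocked extension is witnessed by an unblocked one, so no blocked-and-unwitnessed extension can exist---hence $\TT$ is safe. Conversely, if $\TT$ is unsafe, the lemma guarantees an unblocked counterexample at each sufficiently large bound; the algorithm never returns safe (by the above), so the bound grows unboundedly and the error check eventually succeeds. \textbf{Non-termination} is exhibited by any safe instance whose reachable-state space is unbounded and whose acceleration does not suffice to derive a contradiction, for example $(x = 0, x' = x + 1, x < 0)$: no error state is ever reachable, every extension check remains satisfiable, and the bound grows without bound.

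The hard part will be the subsumption lemma, whose validity hinges on the caching mechanism in line~\ref{abmcb:cache}. As \Cref{ex:caching} illustrates, without caching, successive accelerations of the same cycle produce distinct identifiers, so $\beta_1$ and $\beta_2$ fail to block redundant unrollings of an enclosing loop. Formally showing that caching restores the intended subsumption---so that every trace a blocking clause eliminates is matched by a trace the solver may still pick---is the main technical challenge of the proof.
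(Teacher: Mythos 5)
Your overall architecture matches the paper's: soundness for $\unsafe$ is the easy direction, and everything else reduces to showing that the blocking clauses never eliminate \emph{all} witnesses of a reachable error state. But two concrete problems remain. First, your non-termination witness is wrong: on $(x=0,\ x'=x+1,\ x<0)$ the algorithm \emph{terminates} with $\safe$ at $b=3$, by exactly the mechanism of \Cref{ex:Proving Safety with Blocking Clauses}. Indeed, $\beta_1$ forces the accelerated transition to be used at step $2$ (so $\ell^{(2)}=1$); no new transition is learned at $b=3$ because the graph $(V,E)$ has no edge leaving the learned transition yet, so the trace $[\tau,\tau,\tau_\ell]$ has no cyclic suffix; hence only $\mu_3(\tau\land\ell=0)$ is added in Line~\ref{abmcb:add-alternative}, which $\beta_2$ then contradicts, making the check in Line~\ref{abmcb:sat-check} unsatisfiable. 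A correct witness needs a safe instance on which the SMT problem stays satisfiable at every bound despite the blocking clauses (e.g.\ a disjunctive $\tau$ whose models keep evading the finitely many blocked patterns).

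Second, the ``subsumption lemma'' you defer is not merely the hard part --- as you state it, it is too weak to close the argument. Locally replacing a blocked occurrence of $\vec{\pi}^\circlearrowleft$ by $\tau_\ell$ produces a new trace that may itself violate a \emph{different} blocking clause (in particular the $\beta_2$ belonging to the same $\tau_\ell$, or clauses added at other bounds), so a local rewriting argument does not obviously converge to an admissible trace. The paper instead fixes one concrete run $\sigma(\vec{x}^{(0)})\to_{\tau_0}\cdots\to_{\tau_{c-1}}\sigma(\vec{x}^{(c)})$ over original transitions and, for an arbitrary sequence of formulas the algorithm might add, \emph{constructs} a single model $\theta$ of the whole SMT problem at some bound $k$, by choosing indices $c_0=0<c_1<\cdots<c_k=c$ greedily: whenever a transition $\pi_j$ is learned at bound $j$, it takes the \emph{maximal} $f_j$ with $\sigma(\vec{x}^{(c_j)})\to_{\pi_j}\sigma(\vec{x}^{(c_j+f_j)})$ and sets $c_{j+1}=c_j+f_j$. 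Maximality is precisely what makes $\beta_2$ true under $\theta$ (the cycle cannot be taken once more after a maximal accelerated step), and the distinct $\ell$-identifiers --- guaranteed by caching, as you rightly note --- are what make $\beta_1$ true when $f_j>0$, while $f_j=0$ makes $\beta_1$ hold because the cycle is not applicable at all. This greedy, global construction (together with assigning suitable fresh values to variables such as $\ell^{(k)}$ that blocking clauses mention beyond the current bound) is the missing idea; without it, the appeal to ``an unblocked counterexample at each sufficiently large bound'' is not justified.
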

\makeproof*{thm:properties}{
  \properties*
  \begin{proof}
    Soundness for unsafety and non-termination are trivial.
    In the following, we prove refutational completeness.
    More precisely, we prove that when
    assuming a complete SMT solver, unsafety of $\TT$ implies that \Cref{alg:abmc-blocking} eventually returns $\unsafe$, independently of the non-determinisms of \Cref{alg:abmc-blocking} (i.e., independently of the models that are found by the SMT solver and the implementation of $\shouldaccel$).
    Together with soundness for unsafety, this means that \Cref{alg:abmc-blocking} returns
    $\unsafe$ if and only if $\TT$ is unsafe, which in turn implies soundness for
    safety.

    The proof proceeds as follows:
    We consider an arbitrary but fixed unsafe safety problem $(\psi_\init,\tau,\psi_\err)$.
Thus, there is a model $\sigma'$ and a sequence  $[\tau_i]_{i=0}^{c-1} \in \sip(\tau)^c$ of
original transitions such that $\sigma' \models \mu_0(\psi_\init)$, $\sigma'(\vec{x}^{(i)})
\to_{\tau_i} \sigma'(\vec{x}^{(i+1)})$ for all $0 \leq i < c$, and $\sigma' \models
\mu_c(\psi_\err)$.
So in particular, $\sigma'$ can be extended to a substitution $\sigma$ such
that
 \begin{equation}
      \label{eq:path-sat}
      \sigma \models \mu_0(\psi_\init) \land \bigwedge_{i=0}^{c-1} \mu_i(\tau_i),
    \end{equation}
 i.e.,
 $[\tau_i]_{i=0}^{c-1}$ and $\sigma$ witness that the state $\sigma(\vec{x}^{(c)})$ is reachable from the initial state $\sigma(\vec{x}^{(0)})$.
 We now have to prove that every run of \Cref{alg:abmc-blocking} yields $\unsafe$.

   To this end, we define a class of infinite sequences $[\phi_j]_{j \in \NN}$ of formulas with a certain structure.
    This class is defined in such a way that every run of \Cref{alg:abmc-blocking} corresponds to a prefix of such an infinite sequence in the sense that $\phi_j$ is the formula that is added to the SMT problem in Line~\ref{abmcb:add} or \ref{abmcb:add-alternative} when $b = j$.
    Note that the converse is not true, i.e., there are sequences of formulas $[\phi_j]_{j \in \NN}$ with the desired structure that do not correspond to a run of \Cref{alg:abmc-blocking}.
    Next, we choose such a sequence $[\phi_j]_{j \in \NN}$ 
    arbitrarily.
    Let
    \[
      \xi_m \Def \mu_0(\psi_\init) \land \bigwedge_{j=0}^{m} \phi_j \text{ for any $m \in
        \NN$ \quad and \quad} \xi_{-1} \Def \mu_0(\psi_\init).
          \]
          Then our goal is to prove that there is a $k \in \NN$ and a model $\theta$ of $\xi_{k-1}$
              such that $\theta(\vec{x}^{(0)}) = \sigma(\vec{x}^{(0)})$ and $\theta(\vec{x}^{(k)}) = \sigma(\vec{x}^{(c)})$.
    Note that if $[\phi_j]_{j \in \NN}$ corresponds to a run of \Cref{alg:abmc-blocking},
    then $\xi_{k-1}$ is the formula that is checked by \Cref{alg:abmc-blocking} in
    Line~\ref{abmcb:sat-check} when $b = k-1$.
    Now assume that $\sigma(\vec{x}^{(c)})$ is an error state, i.e., $\sigma \models \mu_c(\psi_\err)$.
    Then \Cref{alg:abmc-blocking} checks the formula $\xi_{k-1} \land \mu_k(\psi_\err)$ in
    Line~\ref{abmcb:err} in the next iteration (when $b=k$).
    As $\theta \models \xi_{k-1}$, $\sigma \models \mu_c(\psi_\err)$, and $\theta(\vec{x}^{(k)}) = \sigma(\vec{x}^{(c)})$, we get $\theta \models \xi_{k-1} \land \mu_k(\psi_\err)$.
    Thus, \Cref{alg:abmc-blocking} returns $\unsafe$ when $b = k$.
    As $[\phi_j]_{j \in \NN}$ was chosen arbitrarily, this implies refutational completeness.

    In the sequel, we introduce various notations, which are summarized in
    \Cref{notations} for easy reference. We now define our class of infinite sequences
    $[\phi_j]_{j \in \NN}$.
    We have
    \begin{figure}[th!]
      \fbox{\begin{minipage}{0.98\textwidth}
    \begin{description}
    \item[$\rho$:] an original or learned transition
    \item[\protect{$[\phi_j]_{j \in \NN}$:}] an infinite sequence of formulas such that
every $\ABMC_\block$-run corresponds to a prefix of such a sequence, as explained above.
      By choosing this sequence arbitrarily, we cover all possible $\ABMC_\block$-runs.
    \item[$\chi_j$:] a formula such that $\mu_j(\chi_j)$ is equivalent to $\phi_j$ without blocking clauses
    \item[$\pi_{j}$:] the transition that is learned by \Cref{alg:abmc-blocking} when $b = j$, if any
    \item[$\id(\rho)$:] We define $\id(\rho) = 0$ if $\rho \in \sip(\tau)$, i.e., if $\rho$ is an original transition.
      Otherwise, $\id(\rho)$ is a unique identifier of $\rho$.
    \item[$\vec{\rho}^j$:] a vector of transitions such that ${\to_{\rho}} \subseteq {\to^+_\tau}$ for each $\rho \in \vec{\rho}^j$.
      Intuitively, $\vec{\rho}^j$ corresponds to the cyclic suffix $\vec{\pi}^\circlearrowleft$ when $b = j$, if any.
      More precisely, then we have $\vec{\pi}^\circlearrowleft = [\rho^j_i \land \ell =
        \id(\rho^j_i)]_{i=0}^{|\vec{\rho}^j|-1}$.
    \item[$\beta_j$:] the conjunction of all
blocking clauses that are added by \Cref{alg:abmc-blocking} when $b = j$
      \begin{description}
      \item[$\beta_{1,j},\beta_{2,j}$:] the blocking clauses that are constructed in Line~\ref{abmcb:block2} when $b=j$, if any
      \end{description}
    \item[\protect{$[\tau_i]_{i=0}^{c-1}$}:] a sequence of original transitions
    \item[$\sigma$:] a model of $\mu_0(\psi_\init)$ such that $\sigma(\vec{x}^{(0)}) \to_{\tau_0} \sigma(\vec{x}^{(1)}) \to_{\tau_1} \ldots \to_{\tau_{c-1}} \sigma(\vec{x}^{(c)})$, i.e., $\sigma$ witnesses reachability of $\sigma(\vec{x}^{(1)}),\ldots,\sigma(\vec{x}^{(c)})$ (from an initial state)
    \item[$k$:] a number for which we will prove that \Cref{alg:abmc-blocking} shows reachability of $\sigma(\vec{x}^{(c)})$ when $b=k-1$, at the latest
    \item[$c_j$:] for each $j \in \{0,\ldots,k\}$, $c_j$ is a number such that \Cref{alg:abmc-blocking} shows reachability of $\sigma(\vec{x}^{(c_j)})$ when $b=j - 1$, at the latest
    \item[$\xi_m$:] as defined above -- the formula that the $\ABMC_\block$-run which
      corresponds to $[\phi_j]_{j=0}^{k-1}$ (if any) checks in Line~\ref{abmcb:sat-check} when
      $b=m$. So $\xi_m$ may contain $\vec{x}^{(0)}, \ldots, \vec{x}^{(m+1)}$.
    \item[$\theta_j$:] for each $j \in \{0,\ldots,k\}$, $\theta_j$ is a partial model of
      $\xi_{j-1}$, i.e., $\theta_j(\xi_{j-1})$ is satisfiable. So the domain of
      $\theta_j$ may contain $\vec{x}^{(0)}, \ldots, \vec{x}^{(j)}$.
    \item[$\eta_j$:] for each $j \in \{0,\ldots,k\}$, $\eta_j$ instantiates the variables
      that occur in $\pi_j$, but not in $\vec{\rho}^j$, i.e., the additional variables
      introduced by acceleration
    \item[$\theta$:] an extension of $\theta_k$ such that $\theta \models \xi_{k-1}$
    \item[$\mu_j^{\pre}$:] a substitution such that $\mu_j^{\pre}(\vec{x}') = \vec{x}'$ and $\mu_j^{\pre}(x) = x^{(j)}$ if $x \notin \vec{x}'$.
      In other words, $\mu_j^{\pre}$ behaves like $\mu_j$ on all but the post-variables, where it is the identity.
    \end{description}
    \end{minipage}}
    \caption{Notation used throughout the proof}\label{notations}
    \end{figure}
     \[
      \phi_j \Def \mu_j(\chi_j) \land \beta_j \qquad \text{and} \qquad \chi_j \Def (\tau \land \ell = 0) \lor \pi_{j}
    \]
    where one of the following holds for each $j \in \NN$:
    \begin{align*}
      \pi_{j} \equiv \bot \qquad \text{and} \qquad \beta_j \equiv \top
    \end{align*}
    or
    \begin{align*}
      \pi_{j} & {} \equiv \accel(\vec{\rho}^{j}) \land \ell = \id(\accel(\vec{\rho}^{j})), \\
      \beta_{1,j} & {} \equiv  \neg\left(\bigwedge_{i=0}^{|\vec{\rho}^{j}|-1} \mu_{j+i}(\rho_i^{j} \land \ell = \id(\rho_i^{j}))\right), \\
      \beta_{2,j} & {} \equiv \ell^{(j)} \neq \id(\accel(\vec{\rho}^{j})) \lor \neg\left(\bigwedge_{i=0}^{|\vec{\rho}^{j}|-1} \mu_{j+i+1}(\rho_i^{j} \land \ell = \id(\rho_i^{j}))\right), \qquad \text{and} \\
      \beta_j & {} \equiv \beta_{1,j} \land \beta_{2,j}
    \end{align*}
    Here, the literals $\ell = \id(\rho^j_i)$ are contained in $\beta_{1,j}$ and $\beta_{2,j}$ as \Cref{alg:abmc-blocking} conjoins $\ell = 0$ to $\tau$ and $\ell = \id$ to learned transitions.
    So we assume that $\rho^j_i$ is an (original or learned) transition without the additional literal for $\ell$, and we made those literals explicit in the definitions of $\beta_{1,j}$ and $\beta_{2,j}$ for clarity.
    Thus,  $\phi_j$ is equivalent to the formula that is added to the SMT problem in Line~\ref{abmcb:add-alternative} or \ref{abmcb:add}.

 To obtain $k$ and $\theta$, we now define sequences $[c_j]_{j=0}^k$ and
    $[\theta_j]_{j=0}^k$ inductively
    such that $\theta_j(\xi_{j-1})$ is satisfiable
    and $\theta_j(x^{(j)}) = \sigma(x^{(c_j)})$ for all $x \neq \ell$ such that $x^{(c_j)} \in \dom(\sigma)$. Here, two problems
    have to be solved: The first problem is that \Cref{alg:abmc-blocking} may replace many
    evaluation steps by a single step using an accelerated transition. Therefore, the
    $j^{th}$ step in the algorithm does not necessarily correspond to the $j^{th}$ step in the
    evaluation with the original transitions $[\tau_i]_{i=0}^{c-1}$, but to the $c^{th}_j$ step where $c_j \geq
    j$. The second problem is that the accelerated transitions may contain additional
    variables which have to be instantiated appropriately (to this end, we use
    substitutions $\eta_j$).
    \begin{enumerate}[label=\alph*.]
    \item \label{it:step-init} $c_0 \Def 0$ and $\theta_0 \Def [x^{(0)}/\sigma(x^{(0)})
      \mid x \notin \vec{x}']$
    \item \label{it:step-term} if $c_j = c$, then $k \Def j$, i.e., then the definition of $[c_j]_{j=0}^k$ and $[\theta_{j}]_{j=0}^{k}$ is complete
    \item otherwise:
      \begin{enumerate}[label*=\alph*.]
      \item \label{it:step1} if $\pi_{j} \equiv \bot$, then:
        \begin{enumerate}[label*=\alph*.]
        \item \label{it:step1-c} $c_{j+1} \Def c_j + 1$
        \item \label{it:step1-sigma} $\theta_{j+1}(\ell^{(j)}) \Def 0$
        \end{enumerate}
      \item \label{it:f} if $\pi_{j} \equiv \accel(\vec{\rho}^{j}) \land \ell = \id(\accel(\vec{\rho}^{j}))$ let $f_j$ be the maximal natural number such that $\sigma(\vec{x}^{(c_j)}) \to_{\pi_{j}} \sigma(\vec{x}^{(c_j + f_j)})$, or $0$ if $\sigma(\vec{x}^{(c_j)}) \not\to_{\pi_{j}} \sigma(\vec{x}^{(c_j + f_j)})$ for all $f_j > 0$
        \begin{enumerate}[label*=\alph*.]
        \item \label{it:step2} if $f_j = 0$, then:
          \begin{enumerate}[label*=\alph*.]
          \item \label{it:step2-c} $c_{j+1} \Def c_j+1$
          \item \label{it:step2-sigma} $\theta_{j+1}(\ell^{(j)}) \Def 0$
          \end{enumerate}
        \item \label{it:step3} if $f_j>0$, then:
          \begin{enumerate}[label*=\alph*.]
          \item \label{it:step3-c} $c_{j+1} \Def c_{j}+f_j$
          \item \label{it:step3-eta} $\theta_{j+1}(x^{(j)}) \Def \eta_j(x^{(c_{j})})$ for all $x^{(c_{j})} \in \VV(\mu_{c_j}(\pi_j)) \setminus \dom(\sigma)$, where $\eta_j$ is a substitution with $\eta_j \circ \sigma \circ (\mu_{c_j}^{\pre} \uplus [\vec{x}' / \vec{x}^{(c_j + f_j)}]) \models \pi_{j}$ (which exists due to $\sigma(\vec{x}^{(c_j)}) \to_{\pi_{j}} \sigma(\vec{x}^{(c_j + f_j)})$)
          \end{enumerate}
        \end{enumerate}
      \item \label{it:step-subs1} $\theta_{j+1}(x^{(m)}) \Def \theta_{j}(x^{(m)})$ for all $x^{(m)} \in \dom(\theta_j)$
      \item \label{it:step-subs2} $\theta_{j+1}(x^{(j+1)}) \Def \sigma(x^{(c_{j+1})})$ for all $x^{(c_{j+1})} \in \dom(\sigma)$
      \end{enumerate}
    \end{enumerate}
    Note that the steps \ref{it:step3-eta} and \ref{it:step-subs1} do not contradict each other.
    This is obvious, except for the case that $m=j$ in Step \ref{it:step-subs1}\
    Then $x^{(j)} \in \dom(\theta_j)$.
    Thus, $[x^{(j)}/\theta_j(x^{(j)})]$ has been added to $\theta_j$ in Step \ref{it:step-subs2} (in all other steps where entries are added to $\theta_j$, the indices of the variable(s) and the substitution differ).
    Thus, we have $x^{(c_j)} \in \dom(\sigma)$, and Step \ref{it:step3-eta} explicitly excludes elements of $\dom(\sigma)$.
    Moreover, by construction, we have $c_m < c_{m+1}$, and thus the sequences are well defined.

    Let $\theta(x) \Def \theta_k(x)$ if $x \in \dom(\theta_k)$, and for all $x \notin \dom(\theta_k)$,
    let $\theta(x)$ be a natural number which is larger than all constants that
    occur in $\xi_{k-1}$.
    To see why it is necessary to instantiate variables that are not contained in $\dom(\theta_k)$, recall that $\xi_{k-1}$ contains blocking clauses, and these blocking clauses may refer to ``future steps'', i.e., to steps that are not yet part of the SMT encoding.
 For example, it may happen that $\xi_{k-1}$ contains a blocking clause that refers to
 $x^{(m)}$ with $m > k$.
Similarly, 
$\xi_{k-1}$ may contain a blocking clause referring to $\ell^{(k)}$, which is also not contained in $\dom(\theta_k)$.
    The reason is that $\ell^{(j)}$'s value is the identifier of the transition that is used for the step from $\sigma(\vec{x}^{(c_j)})$ to $\sigma(\vec{x}^{(c_{j+1})})$, but $\sigma(\vec{x}^{(c_{k})}) = \sigma(\vec{x}^{(c)})$ is the last state of the run $\sigma(\vec{x}^{(0)}) \to_\tau \ldots \to_\tau \sigma(\vec{x}^{(c)})$ that we are considering.

    We prove $\theta \models \xi_{k-1}$.
    To this end, we prove the following three statements individually:
    \begin{align}
      \label{goal1}
      \theta & {} \models \mu_0(\psi_\init)\\
      \label{goal2}
      \theta & {} \models \bigwedge_{j=0}^{k-1} \mu_j(\chi_j) \\
      \label{goal3}
      \theta & {} \models \bigwedge_{j=0}^{k-1} \beta_j
    \end{align}
    Then $\theta \models \xi_{k-1}$ follows due to the definition of
    $\xi_{k-1}$.

    For \eqref{goal1}, we have
    \begin{align*}
      &&\sigma & {} \models \mu_0(\psi_\init) \tag{\ref{eq:path-sat}} \\
      \curvearrowright && \theta_{0} & {} \models \mu_0(\psi_\init) \tag{\ref{it:step-init}} \\
      \curvearrowright && \theta_{k} & {} \models \mu_0(\psi_\init) \tag{\ref{it:step-subs1}} \\
      \curvearrowright && \theta & {} \models \mu_0(\psi_\init) \tag{def.\ of $\theta$} \\
    \end{align*}

    For \eqref{goal2}, let $j \in \{0,\ldots,k-1\}$ be arbitrary but fixed.
    We prove $\theta \models \mu_j(\chi_j)$.
    First consider the case $\pi_{j} \equiv \bot$, see \eqref{it:step1}.
    Then:
    \begin{align*}
      && \sigma & {} \models \mu_{c_j}(\tau_{c_j}) \tag{\eqref{eq:path-sat}, as $j < k$ implies $c_j < c$} \\
      \curvearrowright && \sigma & {} \models \mu_{c_j}(\tau) \tag{$\tau_{c_j} \in \sip(\tau)$} \\
      \curvearrowright && \sigma \circ \mu_{c_j} & {} \models \tau \\
      \curvearrowright && \sigma \circ (\mu_{c_j}^{\pre} \uplus [\vec{x}'/\vec{x}^{(c_j+1)}]) & {} \models \tau \tag{def.\ of $\mu_{c_j}$ and $\mu_{c_j}^{\pre}$}\\
      \curvearrowright && (\sigma \circ \mu_{c_j}^{\pre}) \uplus [\vec{x}'/\sigma(\vec{x}^{(c_j+1)})] & {} \models \tau \\
      \curvearrowright && (\sigma \circ \mu_{c_j}^{\pre}) \uplus [\vec{x}'/\sigma(\vec{x}^{(c_{j+1})})] & {} \models \tau \tag{\ref{it:step1-c}}\\
      \curvearrowright && (\theta_{j} \circ \mu_{j}^{\pre}) \uplus [\vec{x}'/\theta_{j+1}(\vec{x}^{(j+1)})] & {} \models \tau \tag{\ref{it:step-subs2}} \\
      \curvearrowright && (\theta_{j+1} \circ \mu_{j}^{\pre}) \uplus [\vec{x}'/\theta_{j+1}(\vec{x}^{(j+1)})] & {} \models \tau \tag{\ref{it:step-subs1}} \\
      \curvearrowright && \theta_{j+1} \circ (\mu_{j}^{\pre} \uplus [\vec{x}'/\vec{x}^{(j+1)}]) & {} \models \tau \\
      \curvearrowright && \theta_{j+1} \circ \mu_{j} & {} \models \tau \tag{def.\ of $\mu_{j}$ and $\mu_{j}^{\pre}$} \\
      \curvearrowright && \theta_{j+1} & {} \models \mu_{j}(\tau) \\
      \curvearrowright && \theta_{j+1} \circ [\ell^{(j)}/0] & {} \models \mu_{j}(\chi_{j}) \tag{$\tau \land \ell = 0 \models \chi_{j}$} \\
      \curvearrowright && \theta_{j+1} & {} \models \mu_{j}(\chi_{j}) \tag{\ref{it:step1-sigma}} \\
      \curvearrowright && \theta_{k} & {} \models \mu_{j}(\chi_{j}) \tag{\ref{it:step-subs1}} \\
      \curvearrowright && \theta & {} \models \mu_{j}(\chi_{j}) \tag{def.\ of $\theta$}
    \end{align*}
    The case \eqref{it:step2} is analogous.
    Now consider the case that $\pi_{j} \not\equiv \bot$ and $f_j > 0$, see \eqref{it:step3}.
    Then:
    \begin{align*}
      && \eta_j \circ \sigma \circ (\mu_{c_j}^{\pre} \uplus [\vec{x}' / \vec{x}^{(c_j + f_j)}]) & {} \models \pi_{j} \tag{\ref{it:step3-eta}} \\
      \curvearrowright && \eta_j \circ \sigma \circ (\mu_{c_j}^{\pre} \uplus [\vec{x}' / \vec{x}^{(c_{j+1})}]) & {} \models \pi_{j} \tag{\ref{it:step3-c}} \\
      \curvearrowright && (\eta_j \circ \sigma \circ \mu_{c_j}^{\pre}) \uplus [\vec{x}' / \sigma(\vec{x}^{(c_{j+1})})] & {} \models \pi_{j} \tag{$\dom(\eta_j) \cap \vec{x}^{(c_{j+1})} = \emptyset$} \\
      \curvearrowright && ((\eta_j \circ \mu_{c_j}^{\pre}) \circ (\sigma \circ
      \mu_{c_j}^{\pre})) \uplus [\vec{x}' / \sigma(\vec{x}^{(c_{j+1})})] & {} \models
      \pi_{j} \tag{$\sigma$ is a ground substitution}\\
      \curvearrowright && ((\eta_j \circ \mu_{c_j}^{\pre}) \circ (\theta_j \circ \mu_{j}^{\pre})) \uplus [\vec{x}' / \theta_{j+1}(\vec{x}^{(j+1)})] & {} \models \pi_{j} \tag{\ref{it:step-subs2}} \\
      \curvearrowright && ((\eta_j \circ \mu_{c_j}^{\pre}) \circ (\theta_{j+1} \circ \mu_{j}^{\pre})) \uplus [\vec{x}' / \theta_{j+1}(\vec{x}^{(j+1)})] & {} \models \pi_{j} \tag{\ref{it:step-subs1}} \\
      \curvearrowright && ((\theta_{j+1} \circ \mu_{j}^{\pre}) \circ (\theta_{j+1} \circ \mu_{j}^{\pre})) \uplus [\vec{x}' / \theta_{j+1}(\vec{x}^{(j+1)})] & {} \models \pi_{j} \tag{\ref{it:step3-eta}} \\
      \curvearrowright && (\theta_{j+1} \circ \mu_{j}^{\pre}) \uplus [\vec{x}' /
        \theta_{j+1}(\vec{x}^{(j+1)})] & {} \models \pi_{j} \tag{$\theta_{j+1}$ is a ground substitution} \\
      \curvearrowright && \theta_{j+1} \circ (\mu_{j}^{\pre} \uplus [\vec{x}' / \vec{x}^{(j+1)}]) & {} \models \pi_{j} \\
      \curvearrowright && \theta_{j+1} \circ \mu_{j} & {} \models \pi_{j} \tag{def.\ of $\mu_{j}^{\pre}$ and $\mu_{j}$} \\
      \curvearrowright && \theta_{j+1} & {} \models \mu_{j}(\pi_{j}) \\
      \curvearrowright && \theta_{j+1} 
      & {} \models \mu_j(\chi_{j}) \tag{$\pi_j  \models \chi_{j}$} \\
      \curvearrowright && \theta_{k} & {} \models \mu_{j}(\chi_{j}) \tag{\ref{it:step-subs1}} \\
      \curvearrowright && \theta & {} \models \mu_{j}(\chi_{j}) \tag{def.\ of $\theta$} \\
    \end{align*}
    This finishes the proof of \eqref{goal2}.

    For \eqref{goal3}, let $j \in \{0,\ldots,k-1\}$ be arbitrary but fixed.
    We prove $\theta \models \beta_j$.
    If $\beta_j \equiv \top$, the claim is trivial, so assume $\beta_j \not\equiv \top$, i.e., $\beta_j \equiv \beta_{1,j} \land \beta_{2,j}$.
    We prove $\theta \models \beta_{1,j}$ and $\theta \models \beta_{2,j}$ individually.

    To prove $\theta \models \beta_{1,j}$, we have to show
    \begin{equation}
      \label{goal4}
      \theta \models \neg\left(\bigwedge_{i=0}^{|\vec{\rho}^{j}|-1} \mu_{j+i}(\rho_i^{j} \land \ell = \id(\rho_i^{j}))\right)
    \end{equation}
    by definition of $\beta_{1,j}$.
    First consider the case $j + |\vec{\rho}^{j}| > k$.
    Then:
    \begin{align*}
      &&& \phantom{{} \models {}} \ell^{(j + |\vec{\rho}^{j}| - 1)} \notin \dom(\theta_k) \\
      \curvearrowright && \theta & {} \models \ell^{(j+|\vec{\rho}^{j}|-1 )} \neq \id(\rho_{|\vec{\rho}^{j}|-1}^{j}) \tag{def.\ of $\theta$, as $\id(\rho_{|\vec{\rho}^{j}|-1}^{j})$ occurs in $\xi_{k-1}$} \\
      \curvearrowright && \theta & {} \models \mu_{j+|\vec{\rho}^{j}|-1}(\ell \neq \id(\rho_{|\vec{\rho}^{j}|-1}^{j})) \tag{def.\ of $\mu_{j+|\vec{\rho}^{j}|-1}$} \\
      \curvearrowright &&& \phantom{{} \models {}} \eqref{goal4}
    \end{align*}
    Now consider the case $j + |\vec{\rho}^{j}| \leq k$.
    First assume $f_j > 0$.
    Then:
    \begin{align*}
      && \eta_j \circ \sigma \circ (\mu_{c_j}^{\pre} \uplus [\vec{x}' / \vec{x}^{(c_j + f_j)}]) & {} \models \pi_{j} \tag{\ref{it:step3-eta}} \\
      \curvearrowright&& \eta_j \circ \sigma \circ (\mu_{c_j}^{\pre} \uplus [\vec{x}' / \vec{x}^{(c_j + f_j)}]) & {} \models \ell = \id(\accel(\vec{\rho}^{j})) \tag{$\pi_j \models \ell = \id(\accel(\vec{\rho}^{j}))$} \\
      \curvearrowright && \eta_j \circ \sigma & {} \models \ell^{(c_j)} = \id(\accel(\vec{\rho}^{j})) \tag{def.\ of $\mu_{c_j}^{\pre}$} \\
      \curvearrowright && \eta_j & {} \models \ell^{(c_j)} = \id(\accel(\vec{\rho}^{j})) \tag{$\ell^{(c_j)} \notin \dom(\sigma)$} \\
      \curvearrowright && \theta_{j+1} & {} \models \ell^{(j)} = \id(\accel(\vec{\rho}^{j})) \tag{\ref{it:step3-eta}} \\
      \curvearrowright && \theta_{k} & {} \models \ell^{(j)} = \id(\accel(\vec{\rho}^{j})) \tag{\ref{it:step-subs1}} \\
      \curvearrowright && \theta & {} \models \ell^{(j)} = \id(\accel(\vec{\rho}^{j})) \tag{def.\ of $\theta$} \\
      \curvearrowright && \theta & {} \models \ell^{(j)} \neq \id(\rho_{0}^{j}) \tag{$\id(\rho_{0}^{j}) \neq \id(\accel(\vec{\rho}^{j}))$} \\
      \curvearrowright && \theta & {} \models \neg \mu_j(\ell = \id(\rho_{0}^{j})) \tag{def.\ of $\mu_j$}\\
      \curvearrowright &&& \phantom{{} \models {}}  \eqref{goal4}
    \end{align*}
    Now assume $f_j = 0$.
    Then:
    \begin{align*}
      && \sigma(\vec{x}^{(c_j)}) & {} \not\to_{\pi_j} \sigma(\vec{x}^{(c_j + g)}) \tag{for all $g > 0$, as $f_j = 0$} \\
      \curvearrowright &&\sigma(\vec{x}^{(c_j)}) & {} \not\to_{\pi_j} \sigma(\vec{x}^{(c_{j + |\vec{\rho}^{j}|})}) \tag{$j+|\vec{\rho}^{j}| \leq k$ and $c_{j+|\vec{\rho}^{j}|} > c_j$} \\
      \curvearrowright &&\sigma(\vec{x}^{(c_j)}) & {} \not\to_{\vec{\rho}^j} \sigma(\vec{x}^{(c_{j + |\vec{\rho}^{j}|})}) \tag{${\to_{\pi_j}} = {\to_{\vec{\rho}^j}^+}$} \\
      \curvearrowright &&\theta_j(\vec{x}^{(j)}) & {} \not\to_{\vec{\rho}^j} \theta_{j + |\vec{\rho}^{j}|}(\vec{x}^{(j + |\vec{\rho}^{j}|)}) \tag{\ref{it:step-subs2}} \\
      \curvearrowright &&\theta_{j + |\vec{\rho}^{j}|}(\vec{x}^{(j)}) & {} \not\to_{\vec{\rho}^j} \theta_{j + |\vec{\rho}^{j}|}(\vec{x}^{(j + |\vec{\rho}^{j}|)}) \tag{\ref{it:step-subs1}} \\
      \curvearrowright && \theta_{j + |\vec{\rho}^{j}|} & {} \models \neg\left(\bigwedge_{i=0}^{|\vec{\rho}^{j}|-1} \mu_{j+i}(\rho_i^{j})\right) \tag{def.\ of $\to_{\vec{\rho}^{j}}$} \\
      \curvearrowright && \theta_{j + |\vec{\rho}^{j}|} & {} \models \neg\mu_{j+i}(\rho_i^{j}) \tag{for some $0 \leq i < |\vec{\rho}^{j}|$} \\
      \curvearrowright && \theta_{k} & {} \models \neg\mu_{j+i}(\rho_i^{j}) \tag{\ref{it:step-subs1}} \\
      \curvearrowright && \theta & {} \models \neg\mu_{j+i}(\rho_i^{j}) \tag{def.\ of $\theta$, as $j+i < j + |\vec{\rho}^{j}| \leq k$} \\
      \curvearrowright && & \phantom{{} \models {}} \eqref{goal4}
    \end{align*}
    This finishes the proof of \eqref{goal4}.

    To prove $\theta \models \beta_{2,j}$, we have to show
    \begin{equation}
      \label{goal5}
      \theta \models \ell^{(j)} \neq \id(\accel(\vec{\rho}^{j})) \lor \neg\left(\bigwedge_{i=0}^{|\vec{\rho}^{j}|-1} \mu_{j+i+1}(\rho_i^{j} \land \ell = \id(\rho_i^{j}))\right)
    \end{equation}
    by definition of $\beta_{2,j}$.
    First consider the case $j + |\vec{\rho}^{j}| \geq k$.
    Then:
    \begin{align*}
      &&& \phantom{{} \models {}} \ell^{(j + |\vec{\rho}^{j}|)} \notin \dom(\theta_k) \\
      \curvearrowright && \theta & {} \models \ell^{(j + |\vec{\rho}^{j}|)} \neq \id(\rho_{|\vec{\rho}^{j}|-1}^{j}) \tag{def.\ of $\theta$, as $\id(\rho_{|\vec{\rho}^{j}|-1}^{j})$ occurs in $\xi_{k-1}$} \\
      \curvearrowright && \theta & {} \models \neg\mu_{j+|\vec{\rho}^{j}|}(\ell = \id(\rho_{|\vec{\rho}^{j}|-1}^{j})) & \tag{def.\ of $\mu_{j+|\vec{\rho}^{j}|}$} \\
      \curvearrowright &&& \phantom{{} \models {}} \eqref{goal5}
    \end{align*}
    Now consider the case $j + |\vec{\rho}^{j}| < k$.
    First assume $f_j = 0$.
    Then:
    \begin{align*}
      && \theta_{j+1} & {} \models \ell^{(j)} \neq \id(\accel(\vec{\rho}^{j})) \tag{\ref{it:step2-sigma}} \\
      \curvearrowright && \theta_{k} & {} \models \ell^{(j)} \neq \id(\accel(\vec{\rho}^{j})) \tag{\ref{it:step-subs1}} \\
      \curvearrowright && \theta & {} \models \ell^{(j)} \neq \id(\accel(\vec{\rho}^{j})) \tag{def.\ of $\theta$} \\
      \curvearrowright &&& \phantom{{} \models {}} \eqref{goal5}
    \end{align*}
    Now assume $f_j > 0$.
    Then:
    \begin{align*}
      && \sigma(\vec{x}^{(c_j + f_j)}) & {} \not\to_{\pi_j} \sigma(\vec{x}^{(c_j + f_j + g)}) \tag{for all $g > 0$, as $f_j$ is maximal} \\
      \curvearrowright && \sigma(\vec{x}^{(c_{j+1})}) & {} \not\to_{\pi_j} \sigma(\vec{x}^{(c_{j+1} + g)}) \tag{\ref{it:step3-c}} \\
      \curvearrowright && \sigma(\vec{x}^{(c_{j+1})}) & {} \not\to_{\pi_j} \sigma(\vec{x}^{(c_{j + 1 + |\vec{\rho}^j|})}) \tag{$j + |\vec{\rho}^j| < k$ and $c_{j+1} < c_{j+1+|\vec{\rho}^j|}$} \\
      \curvearrowright && \sigma(\vec{x}^{(c_{j+1})}) & {} \not\to_{\vec{\rho}^j} \sigma(\vec{x}^{(c_{j + 1 + |\vec{\rho}^j|})}) \tag{${\to_{\pi_j}} = {\to_{\vec{\rho}^j}^+}$} \\
      \curvearrowright && \theta_{j+1}(\vec{x}^{(j+1)}) & {} \not\to_{\vec{\rho}^j} \theta_{j + 1 + |\vec{\rho}^j|}(\vec{x}^{(j + 1 + |\vec{\rho}^j|)}) \tag{\ref{it:step-subs2}} \\
      \curvearrowright && \theta_{j+1+|\vec{\rho}^j|}(\vec{x}^{(j+1)}) & {} \not\to_{\vec{\rho}^j} \theta_{j + 1 + |\vec{\rho}^j|}(\vec{x}^{(j + 1 + |\vec{\rho}^j|)}) \tag{\ref{it:step-subs1}} \\
      \curvearrowright && \theta_{j+1+|\vec{\rho}^j|} & {} \models \neg\left(\bigwedge_{i=0}^{|\vec{\rho}^{j}|-1} \mu_{j+1+i}(\rho_i^{j})\right) \tag{def.\ of $\to_{\vec{\rho}^j}$} \\
      \curvearrowright && \theta_{j+1+|\vec{\rho}^j|} & {} \models \neg\mu_{j+1+i}(\rho_i^{j}) \tag{for some $i \in \{0,\ldots,|\vec{\rho}^j|-1\}$} \\
      \curvearrowright && \theta_{k} & {} \models \neg\mu_{j+1+i}(\rho_i^{j}) \tag{\ref{it:step-subs1}} \\
      \curvearrowright && \theta & {} \models \neg\mu_{j+1+i}(\rho_i^{j}) \tag{def.\ of $\theta$, as $j+1+i < j+1+|\vec{\rho}^j| \leq k$} \\
      \curvearrowright &&& \phantom{{} \models {}} \eqref{goal5}
    \end{align*}
    This finishes the proof of \eqref{goal5}, and hence also the proof of \eqref{goal3}.
    Thus, we have proven $\theta \models \xi_{k-1}$.

    Now assume that $\sigma(\vec{x}^{(c)})$ is an error state, i.e., $\sigma \models
    \mu_c(\psi_\err)$.
    For all $x \neq \ell$ such that $x^{(c_k)} \in \dom(\sigma)$, we have
    \begin{align*}
      \theta(x^{(k)}) & {} = \theta_k(x^{(k)}) \tag{by def.\ of $\theta$} \\
                      & {} = \sigma(x^{(c_k)}) \tag{by \ref{it:step-subs2}} \\
                      & {} = \sigma(x^{(c)}) \tag{by \ref{it:step-term}}.
    \end{align*}
    Thus, we get $\theta \models \xi_{k-1} \land \mu_k(\psi_\err)$.
    Hence, the run of \Cref{alg:abmc-blocking} that checks the formula $\xi_{k-1}$ when
    $b=k-1$ returns $\unsafe$ when $b=k$.
    Note that satisfiability of $\xi_{k-1}$ implies satisfiability of all formulas that
    are checked by \Cref{alg:abmc-blocking} in Line~\ref{abmcb:sat-check} when $b <
    k-1$.
    Thus, \Cref{alg:abmc-blocking} cannot return $\safe$ when $b < k$.
    As $[\phi_j]_{j \in \NN}$ was chosen arbitrarily,
    this implies refutational completeness. \qed
  \end{proof}
}

\section{Related Work}
\label{sec:related}

There is a large body of literature on bounded model checking that is concerned with encoding temporal logic specifications into propositional logic, see \cite{bmc,bmc2} as starting points.
This line of work is clearly orthogonal to ours.

Moreover, numerous techniques focus on proving \emph{safety} or \emph{satisfiability} of transition systems or CHCs, respectively (e.g., \cite{spacer,GPDR,eldarica,ultimate-chc,freqhorn,synthhorn}).
A comprehensive overview is beyond the scope of this paper.
Instead, we focus on techniques that, like ABMC, aim to prove unsafety by finding long counterexamples.

The most closely related approach is \emph{Acceleration Driven Clause Learning} \cite{adcl,adcl-nt}, a calculus that uses depth-first search and acceleration to find counterexamples.
So one major difference between ABMC and ADCL is that ABMC performs breadth-first search, whereas ADCL performs depth-first search.
Thus, ADCL requires a mechanism for backtracking to avoid getting stuck.
To this end, it relies on a notion of \emph{redundancy}, which is difficult to automate.
Thus, in practice, approximations are used \cite[Sect.~4]{adcl}.
However, even with a complete redundancy check, ADCL might get stuck in a safe part of the search space \cite[Thm.~18]{adcl}.
ABMC does not suffer from such deficits.

Like ADCL, ABMC also tries to avoid redundant work (see \Cref{sec:tuning,sec:blocking}).
However, doing so is crucial for ADCL due to its depth-first strategy, whereas it is a mere optimization for ABMC.

On the other hand, ADCL applies acceleration in a very systematic way, whereas ABMC decides whether to apply acceleration or not based on the model that is found by the underlying SMT solver.
Therefore, ADCL is advantageous for examples with deeply nested loops, where ABMC may require many steps until the SMT solver yields models that allow for accelerating the nested loops one after the other.
Furthermore, ADCL has successfully been adapted for proving non-termination \cite{adcl-nt}, and it is unclear whether a corresponding adaption of ABMC would be competitive.
Thus, both techniques are orthogonal.
See \Cref{sec:experiments} for an experimental comparison of ADCL with ABMC.

Other acceleration-based approaches \cite{fast,flata,journal} can be seen as generalizations of the classical state elimination method for finite automata:
Instead of transforming finite automata to regular expressions, they transform transition systems to formulas that represent the runs of the transition system.
During this transformation, acceleration is the counterpart to the Kleene star in the state elimination method.
Clearly, these approaches differ fundamentally from ours.

In \cite{underapprox15}, under-approximating acceleration techniques are used to enrich the control-flow graph of \pl{C} programs.
Then an external model checker is used to find counterexamples. 
In contrast, ABMC tightly integrates acceleration into BMC, and thus enables an interplay of both techniques:
Acceleration changes the state of the bounded model checker by adding learned transitions to the SMT problem.
Vice versa, the state of the bounded model checker triggers acceleration.
Doing so is impossible if the bounded model checker is used as an external black box.

In \cite{kroening15}, the approach from \cite{underapprox15} is extended by a program transformation that, like our blocking clauses, rules out superfluous traces.
For structured programs, program transformations are quite natural.
However, as we analyze unstructured transition formulas, such a transformation would be
very expensive in our setting.
More precisely, \cite{kroening15} represents programs as CFAs.
To transform them, the edges of the CFA are inspected.
In our setting, the syntactic implicants correspond to these edges.
An important goal of ABMC is to avoid computing them explicitly.
Hence, it is unclear how to apply the approach from \cite{kroening15} in our setting.

Another related approach is described in \cite{iosif12}, where acceleration is integrated into a CEGAR loop in two ways: (1) as preprocessing and (2) to generalize interpolants.
In contrast to (1), we use acceleration ``on the fly''.
In contrast to (2), we do not use abstractions, so our learned transitions can directly be used in counterexamples.
Moreover, \cite{iosif12} only applies acceleration to conjunctive transition formulas, whereas we accelerate conjunctive variants of arbitrary transition formulas.
So in our approach, acceleration techniques are applicable more often, which is particularly useful for finding long counterexamples.

Finally, \emph{transition power abstraction} (TPA) \cite{golem} computes a sequence of over-ap\-prox\-i\-ma\-tions for transition systems where the $n^{th}$ element captures $2^n$ instead of just $n$ steps of the transition relation.
So like ABMC, TPA can help to find long refutations quickly, but in contrast to ABMC, TPA relies on over-approximations.

\section{Experiments and Conclusion}
\label{sec:experiments}

We presented ABMC, which integrates acceleration techniques into bounded model checking.
By enabling BMC to find deep counterexamples, it targets a major limitation of BMC.
However, whether ABMC makes use of transitions that result from acceleration depends on the models found by the underlying SMT solver.
Hence, we introduced \emph{blocking clauses} to enforce the use of accelerated transitions, which also enable ABMC to prove safety in cases where BMC fails.

We implemented ABMC in our tool \tool{LoAT} \cite{loat}.
It uses the SMT solvers \tool{Z3} \cite{z3} and \tool{Yices} \cite{yices}.
Currently, our implementation is restricted to integer arithmetic.
It uses the acceleration technique from \cite{acceleration-calculus} which, in our
experience, is precise in most cases where the values of the variables after executing the
loop can be expressed by polynomials of degree $\leq 2$ (i.e., here we have
${\to_{\accel(\tau)}} = {\to_{\tau}^+}$).
If acceleration yields a non-polynomial formula, then this formula is discarded by our implementation, since \tool{Z3} and \tool{Yices} only support polynomials.
We evaluate our approach on the examples from the category LIA-Lin (linear
CHCs with linear integer arithmetic)\footnote{The restriction
of our approach to linear clauses (with at most one negative literal) is ``inherited''
from BMC.
In contrast, our approach also supports
non-linear arithmetic, but we are not aware of corresponding benchmark collections.}
from the CHC competition~'23 \cite{CHC-COMP}, which
contain problems from numerous applications like verification of
\href{https://github.com/chc-comp/hcai-bench}{\tool{C}},
\href{https://github.com/chc-comp/rust-horn}{\tool{Rust}},
\href{https://github.com/chc-comp/jayhorn-benchmarks}{\tool{Java}}, and
\href{https://github.com/chc-comp/hopv}{higher-order} programs, and
\href{https://github.com/mattulbrich/llreve}{regression verification of \tool{LLVM} programs},
see \cite{chc-comp23} for details. 
By using CHCs as input format, our approach can be used by any CHC-based tool like \tool{Korn} \cite{korn} and \tool{SeaHorn} \cite{seahorn} for \pl{C} and \CPP{} programs, \tool{JayHorn} for \pl{Java} programs \cite{jayhorn}, \tool{HornDroid} for \pl{Android} \cite{horndroid}, \tool{RustHorn} for \pl{Rust} programs \cite{rusthorn}, and \tool{SmartACE} \cite{smartACE} and \tool{SolCMC} \cite{solcmc} for \pl{Solidity}.

We compared several configurations of \tool{LoAT} with the techniques of other leading CHC solvers.
More precisely, we evaluated the following configurations:
\begin{description}
\item[\tool{LoAT}] We used \tool{LoAT}'s implementations of \Cref{alg:bmc} (\tool{LoAT} $\BMC$), \Cref{alg:abmc} (\tool{LoAT} $\ABMC$), \Cref{alg:abmc-blocking} (\tool{LoAT} $\ABMC_\block$), and ADCL (\tool{LoAT ADCL}).
\item[\tool{Z3} \cite{z3}] We used \tool{Z3} 4.13.0, where we evaluated its implementations of the Spacer algorithm (\tool{Spacer} \cite{spacer}) and BMC (\tool{Z3 BMC}).
\item[\tool{Golem} \cite{golem}] We used \tool{Golem} 0.5.0, where we evaluated its
  implementations of
\emph{transition power abstraction}
 (\tool{Golem TPA} \cite{golem}) and BMC (\tool{Golem BMC}).
\item[\tool{Eldarica} \cite{eldarica}] We used \tool{Eldarica} 2.1.0. We tested all five configurations that are used in parallel in its portfolio mode ({\tt -portfolio}), and included the two that found the most counterexamples: CEGAR with acceleration as preprocessing (\tool{Eldarica CEGAR}, {\tt eld -splitClauses:1 -abstract:off -stac}) and symbolic execution (\tool{Eldarica SYM}, {\tt eld -splitClauses:1 -sym}).
\end{description}
Note that all configurations except \tool{Spacer} and \tool{Eldarica CEGAR} are specifically designed for finding counterexamples.
We did not include further techniques for proving safety in our evaluation,
as our focus is on \emph{dis}proving safety.
We ran our experiments on \tool{StarExec} \cite{starexec} with a wallclock timeout of $300$s, a cpu timeout of $1200$s, and a memory limit of $128$GB per example.

\begin{figure}[th!]
  \hspace{1.25em}
  \begin{minipage}{0.38\textwidth}
    \begin{tabular}{|c||c|c||c|c|}
      \hline \multirow{2}{*}{2023} & \multicolumn{2}{c||}{$\unsafe$} & \multicolumn{2}{c|}{$\safe$} \\
      \hhline{~----} & \checkmark & {\bf !} & \checkmark & {\bf !} \\
      \hline\hline \tool{LoAT} $\ABMC$ & 73 & -- & 31 & -- \\
      \hline \tool{LoAT} $\ABMC_\block$ & 72 & 0 & 75 & 11 \\
      \hline \tool{Golem TPA} & 64 & 0 & 83 & 5 \\
      \hline \tool{LoAT} $\BMC$ & 60 & 0 & 36 & 0 \\
      \hline \tool{Z3 BMC} & 57 & -- & 21 & -- \\
      \hline \tool{LoAT ADCL} & 56 & 1 & 0 & -- \\
      \hline \tool{Golem BMC} & 55 & -- & 20 & -- \\
      \hline \tool{Spacer} & 51 & 4 & 151 & 53 \\
      \hline \tool{Eldarica CEGAR} & 46 & 1 & 107 & 13 \\
      \hline \tool{Eldarica SYM} & 46 & 1 & 68 & 15 \\
      \hline
    \end{tabular}
  \end{minipage}
  \begin{minipage}{0.615\textwidth}
    \begin{tikzpicture}[scale=0.8]
      \begin{axis}[
        legend pos=south east,
        ylabel=$\unsafe$ proofs,
        xticklabel={$\pgfmathprintnumber{\tick}$s},
        ymin=31]
        \addplot[color=violet,densely dashed,thick] table[col sep=comma,header=false,x index=0,y index=1] {loat_non_blocking_abmc_23.csv};
        \addlegendentry{\tool{LoAT} $\ABMC$}
        \addplot[color=black,solid,thick] table[col sep=comma,header=false,x index=0,y index=1] {loat_abmc_23.csv};
        \addlegendentry{\tool{LoAT} $\ABMC_\block$}
        \addplot[color=blue,loosely dashed,thick] table[col sep=comma,header=false,x index=0,y index=1] {golem_tpa_23.csv};
        \addlegendentry{\tool{Golem TPA}}
        \addplot[color=red,dotted,thick] table[col sep=comma,header=false,x index=0,y index=1] {loat_bmc_23.csv};
        \addlegendentry{\tool{LoAT} $\BMC$}
        \addplot[color=lightgray,solid,thick] table[col sep=comma,header=false,x index=0,y index=1] {z3_bmc_23.csv};
        \addlegendentry{\tool{Z3 BMC}}
        \addplot[color=OliveGreen,loosely dotted,thick] table[col sep=comma,header=false,x index=0,y index=1] {loat_adcl_23.csv};
        \addlegendentry{\tool{LoAT ADCL}}
      \end{axis}
    \end{tikzpicture}
  \end{minipage}
\end{figure}

The results can be seen in the table above.
The columns with {\bf !} show the num\-ber of unique proofs, i.e., the number of examples that could only be solved by the corresponding configuration.
Such a comparison only makes sense if just one implementation of each algorithm is
considered. For instance, \tool{LoAT}'s, \tool{Z3}'s, and \tool{Golem}'s implementations
of the BMC algorithm work well on the same class of examples, so that none of them finds unique proofs if all of them are taken into account.
Thus, for {\bf !} we disregarded \tool{LoAT} $\ABMC$, \tool{Z3 BMC}, and \tool{Golem BMC}.

The table shows that our implementation of ABMC is very powerful for proving unsafety.
In particular, it shows a significant improvement over \tool{LoAT BMC}, which is implemented very similarly, but does not make use of acceleration.

Note that all unsafe instances that can be solved by ABMC can also be solved by other configurations.
This is not surprising, as \tool{LoAT ADCL} is also based on acceleration techniques.
Hence, ABMC combines the strengths of ADCL and BMC, and conversely, unsafe examples that can be solved with ABMC can usually also be solved by one of these techniques.
So for unsafe instances, the main contribution of ABMC is to have \emph{one} technique
that performs well both on instances with shallow counterexamples (which can be solved by BMC) as well as instances with deep counterexamples only (which can often be solved by ADCL).

On the instance that can only be solved by ADCL, our (A)BMC implementation spends most of the time with applying substitutions, which clearly shows potential for further optimizations.
Due to ADCL's depth-first strategy, it produces smaller formulas, so that applying substitutions is cheaper.

Regarding safe examples, the table shows that our implementation of ABMC is not
competitive with state-of-the-art techniques.\footnote{\tool{LoAT ABMC}
finds fewer safety proofs than \tool{LoAT BMC}  since acceleration
sometimes yields transitions with non-linear arithmetic that make the SMT
problem harder.}
However, it finds several unique proofs.
This is remarkable, as \tool{LoAT} is not at all fine-tuned for proving safety.
For example, we expect that \tool{LoAT}'s results on safe instances can easily be improved by integrating over-approximating acceleration techniques.
While such a variant of ABMC could not prove unsafety, it would presumably be much more powerful for proving safety.
We leave that to future work.

The plot on the previous page shows how many unsafety proofs were found within 300~s, where we only
include the six best configurations for readability. It shows that ABMC is highly competitive on unsafe instances, not only in
\begin{wrapfigure}[11]{r}{0.42\textwidth}

\vspace*{-.1cm}  

\hspace{-9pt}\begin{tikzpicture}[scale=0.8]
\begin{axis}[
    enlargelimits=false,
    xlabel={$\ABMC_\block$},
    ylabel={$\BMC$},
    height=6cm,
    xmax=16000,
    ymax=16000,
    log origin=infty,
    xmode=log,
    ymode=log
]
\addplot+[
    only marks,
    mark=x,
    mark size=2pt,
    color=black
    ]
table
{scatter.dat};
\draw
(axis cs:1,1) --
(axis cs:16000,16000);
\end{axis}
\end{tikzpicture}
\end{wrapfigure}
terms of solved examples, but also in terms of runtime.
The plot on the right compares the length of the counterexamples found by \tool{LoAT} $\ABMC_\block$ and $\BMC$ to show the impact of acceleration.
Here, only examples where both techniques disprove safety are considered, and the counterexamples found by $\ABMC_\block$ may contain accelerated transitions.
There are no points below the diagonal, i.e., the counterexamples found by $\ABMC_\block$ are at most as long as those found by $\BMC$.
The points above the diagonal indicate that the counterexamples found by $\ABMC_\block$
are sometimes shorter by orders of magnitude (note that the axes are log-scaled).

Our results also show that blocking clauses have no significant impact on ABMC's performance on unsafe instances, neither regarding the number of solved examples, nor regarding the runtime.
In fact, $\ABMC_\block$ solved one instance less
than $\ABMC$ (which can, however, also be solved by $\ABMC_\block$ with a larger timeout).
On the other hand, blocking clauses are clearly useful for proving safety, where they even allow \tool{LoAT} to find several unique proofs.

In future work, we plan to support other theories like reals, bitvectors, and arrays, and
we will investigate an extension to non-linear CHCs.
Our implementation is open-source and available on Github.
For the sources, a pre-compiled binary, and more information on our evaluation, we refer
to \cite{website}.

\subsubsection*{Data Availability Statement}

An artifact containing \tool{LoAT} which allows to replicate our experiments is
available at \cite{artifact}.

\bibliographystyle{splncs04}
\paper{
  \bibliography{refs,crossrefs,strings}

\begin{thebibliography}{10}
\providecommand{\url}[1]{\texttt{#1}}
\providecommand{\urlprefix}{URL }
\providecommand{\doi}[1]{https://doi.org/#1}

\bibitem{artifact}
Artifact for ``{I}ntegrating {L}oop {A}cceleration into {B}ounded {M}odel
  {C}hecking'' (2024). \doi{10.5281/zenodo.11954015}

\bibitem{website}
Evaluation of ``{I}ntegrating {L}oop {A}cceleration into {B}ounded {M}odel
  {C}hecking'' (2024), \url{https://loat-developers.github.io/abmc-eval/}

\bibitem{solcmc}
Alt, L., Blicha, M., Hyv{\"{a}}rinen, A.E.J., Sharygina, N.: \tool{SolCMC}:
  \pl{Solidity} compiler's model checker. In: CAV~'22. pp. 325--338. LNCS 13371
  (2022). \doi{10.1007/978-3-031-13185-1\_16}

\bibitem{fast}
Bardin, S., Finkel, A., Leroux, J., Petrucci, L.: {\textsf{FAST}:}
  {A}cceleration from theory to practice. Int. J. Softw. Tools Technol. Transf.
   \textbf{10}(5),  401--424 (2008). \doi{10.1007/s10009-008-0064-3}

\bibitem{bmc2}
Biere, A., Cimatti, A., Clarke, E.M., Strichman, O., Zhu, Y.: Bounded model
  checking. Advances in Computers  \textbf{58},  117--148 (2003).
  \doi{10.1016/S0065-2458(03)58003-2}

\bibitem{bmc}
Biere\noopsort{1}, A.: Bounded model checking. In: Handbook of Satisfiability -
  Second Edition, pp. 739--764. Frontiers in Artificial Intelligence and
  Applications 336, {IOS} Press (2021). \doi{10.3233/FAIA201002}

\bibitem{golem}
Blicha, M., Fedyukovich, G., Hyv{\"{a}}rinen, A.E.J., Sharygina, N.: Transition
  power abstractions for deep counterexample detection. In: TACAS~'22. pp.
  524--542. LNCS 13243 (2022). \doi{10.1007/978-3-030-99524-9\_29}

\bibitem{bozga09a}
Bozga, M., G{\^{\i}}rlea, C., Iosif, R.: Iterating octagons. In: TACAS~'09. pp.
  337--351. LNCS 5505 (2009). \doi{10.1007/978-3-642-00768-2\_29}

\bibitem{flata}
Bozga\noopsort{1}, M., Iosif, R., Kone{\v{c}}n{\'{y}}, F.: Relational analysis
  of integer programs. Tech. Rep. TR-2012-10, VERIMAG (2012),
  \url{https://www-verimag.imag.fr/TR/TR-2012-10.pdf}

\bibitem{horndroid}
Calzavara, S., Grishchenko, I., Maffei, M.: \tool{HornDroid}: Practical and
  sound static analysis of \tool{Android} applications by {SMT} solving. In:
  EuroS{\&}P~'16. pp. 47--62. {IEEE} (2016). \doi{10.1109/EuroSP.2016.16}

\bibitem{CHC-COMP}
{CHC Competition}, \url{https://chc-comp.github.io}

\bibitem{chc-comp23}
{De Angelis}, E., {Govind V K}, H.: {CHC-COMP} 2023: Competition report (2023),
  \url{https://chc-comp.github.io/2023/CHC_COMP_2023_Competition_Report.pdf}

\bibitem{ultimate-chc}
Dietsch, D., Heizmann, M., Hoenicke, J., Nutz, A., Podelski, A.:
  \textsf{Ultimate {TreeAutomizer}} {(CHC-COMP} tool description). In:
  HCVS/PERR@ETAPS '19. pp. 42--47. EPTCS 296 (2019). \doi{10.4204/EPTCS.296.7}

\bibitem{yices}
Dutertre, B.: \tool{Yices} 2.2. In: CAV~'14. pp. 737--744. LNCS 8559 (2014).
  \doi{10.1007/978-3-319-08867-9\_49}

\bibitem{enderton}
Enderton, H.B.: A Mathematical Introduction to Logic. Academic Press (1972)

\bibitem{korn}
Ernst, G.: Loop verification with invariants and contracts. In: VMCAI~'22. pp.
  69--92. LNCS 13182 (2022). \doi{10.1007/978-3-030-94583-1\_4}

\bibitem{freqhorn}
Fedyukovich, G., Prabhu, S., Madhukar, K., Gupta, A.: Solving constrained
  {Horn} clauses using syntax and data. In: FMCAD~'18. pp.~1--9 (2018).
  \doi{10.23919/FMCAD.2018.8603011}

\bibitem{acceleration-calculus}
Frohn, F.: A calculus for modular loop acceleration. In: TACAS~'20. pp. 58--76.
  LNCS 12078 (2020). \doi{10.1007/978-3-030-45190-5\_4}

\bibitem{journal}
Frohn, F., Naaf, M., Brockschmidt, M., Giesl, J.: Inferring lower runtime
  bounds for integer programs. {ACM} Trans. Program. Lang. Syst.
  \textbf{42}(3),  13:1--13:50 (2020). \doi{10.1145/3410331}

\bibitem{loat}
Frohn\noopsort{4}, F., Giesl, J.: Proving non-termination and lower runtime
  bounds with \tool{LoAT} (system description). In: IJCAR~'22. pp. 712--722.
  LNCS 13385 (2022). \doi{10.1007/978-3-031-10769-6\_41}

\bibitem{adcl-nt}
Frohn\noopsort{5}, F., Giesl, J.: Proving non-termination by {Acceleration
  Driven Clause Learning}. In: CADE~'23. pp. 220--233. LNCS 14132 (2023).
  \doi{10.1007/978-3-031-38499-8\_13}

\bibitem{adcl}
Frohn\noopsort{6}, F., Giesl, J.: {ADCL}: {A}cceleration {D}riven {C}lause
  {L}earning for constrained {H}orn clauses. In: SAS~'23. pp. 259--285. LNCS
  14284 (2023). \doi{10.1007/978-3-031-44245-2\_13}

\bibitem{report}
Frohn\noopsort{6}, F., Giesl, J.: Integrating loop acceleration into bounded
  model checking. CoRR  \textbf{abs/2401.09973} (2024).
  \doi{10.48550/arXiv.2401.09973}

\bibitem{seahorn}
Gurfinkel, A., Kahsai, T., Komuravelli, A., Navas, J.A.: The \tool{SeaHorn}
  verification framework. In: CAV~'15. pp. 343--361. LNCS 9206 (2015).
  \doi{10.1007/978-3-319-21690-4\_20}

\bibitem{GPDR}
Hoder, K., Bj{\o}rner, N.S.: Generalized property directed reachability. In:
  {SAT} '12. pp. 157--171. LNCS 7317 (2012).
  \doi{10.1007/978-3-642-31612-8\_13}

\bibitem{iosif12}
Hojjat, H., Iosif, R., Kone\v{c}n\'{y}, F., Kuncak, V., R{\"{u}}mmer, P.:
  Accelerating interpolants. In: ATVA~'12. pp. 187--202. LNCS 7561 (2012).
  \doi{10.1007/978-3-642-33386-6\_16}

\bibitem{eldarica}
Hojjat, H., R{\"{u}}mmer, P.: The \tool{Eldarica} {Horn} solver. In: FMCAD~'18.
  pp.~1--7 (2018). \doi{10.23919/FMCAD.2018.8603013}

\bibitem{jayhorn}
Kahsai, T., R{\"{u}}mmer, P., Sanchez, H., Sch{\"{a}}f, M.: \tool{JayHorn}: {A}
  framework for verifying \pl{Java} programs. In: CAV~'16. pp. 352--358. LNCS
  9779 (2016). \doi{10.1007/978-3-319-41528-4\_19}

\bibitem{spacer}
Komuravelli, A., Gurfinkel, A., Chaki, S.: {SMT}-based model checking for
  recursive programs. Formal Methods Syst. Des.  \textbf{48}(3),  175--205
  (2016). \doi{10.1007/s10703-016-0249-4}

\bibitem{underapprox15}
Kroening, D., Lewis, M., Weissenbacher, G.: Under{-}approximating loops in
  \pl{C} programs for fast counterexample detection. Formal Methods Syst. Des.
  \textbf{47}(1),  75--92 (2015). \doi{10.1007/s10703-015-0228-1}

\bibitem{kroening15}
Kroening\noopsort{5}, D., Lewis, M., Weissenbacher, G.: Proving safety with
  trace automata and bounded model checking. In: {FM}~'15. pp. 325--341. LNCS
  9109 (2015). \doi{10.1007/978-3-319-19249-9\_21}

\bibitem{rusthorn}
Matsushita, Y., Tsukada, T., Kobayashi, N.: \tool{RustHorn}: {CHC}-based
  verification for \pl{Rust} programs. {ACM} Trans. Program. Lang. Syst.
  \textbf{43}(4),  15:1--15:54 (2021). \doi{10.1145/3462205}

\bibitem{z3}
\noopsort{Moura}{de Moura}, L., Bj{\o}rner, N.: \tool{Z3}: An efficient {SMT}
  solver. In: TACAS\ '08. pp. 337--340. LNCS 4963 (2008).
  \doi{10.1007/978-3-540-78800-3\_24}

\bibitem{starexec}
Stump, A., Sutcliffe, G., Tinelli, C.: \tool{StarExec}: {A} cross-community
  infra\-structure for logic solving. In: IJCAR~'14. pp. 367--373. LNCS 8562
  (2014). \doi{10.1007/978-3-319-08587-6\_28}

\bibitem{smartACE}
Wesley, S., Christakis, M., Navas, J.A., Trefler, R.J., W{\"{u}}stholz, V.,
  Gurfinkel, A.: Verifying \pl{Solidity} smart contracts via communication
  abstraction in \tool{SmartACE}. In: VMCAI~'22. pp. 425--449. LNCS 13182
  (2022). \doi{10.1007/978-3-030-94583-1\_21}

\bibitem{synthhorn}
Zhu, H., Magill, S., Jagannathan, S.: A data-driven {CHC} solver. In: {PLDI}
  '18. pp. 707--721 (2018). \doi{10.1145/3192366.3192416}

\end{thebibliography}
}
\report{
  \providecommand{\noopsort}[1]{}

  \clearpage 
  \begin{appendix}
\appendixproofsection{Appendix}\label{sec:MissingProofs}
\appendixproof*{thm:properties}
\end{appendix}
}

\end{document}